\newcommand{\R}{\ensuremath{\mathbb{R}}}
\newcommand{\Q}{\ensuremath{\mathbb{Q}}}
\newcommand{\Z}{\ensuremath{\mathbb{Z}}}
\newcommand{\cplusplus}{C\raisebox{0.5ex}{\small ++}}
\newcommand{\Filt}[1]{\ensuremath{\leq_{#1}}}
\newcommand{\complexity}[1]{\textsc{#1}}
\newcommand{\NPH}{\complexity{NP-Hard}}
\newcommand{\NPC}{\complexity{NP-Complete}}
\newcommand{\ablp}{\textsc{$\alpha$-Balanced-Minimum-Blowup}}
\newcommand{\avertex}{\textsc{$\alpha$-Subgraph-Balanced-Vertex-Separator}}
\newcommand{\bunny}{{\textsf{B}}}
\newcommand{\sphere}{\textsf{S}}
\newcommand{\clique}{\textsf{C}}
\newcommand{\multiblob}{\textsf{M}}
\newcommand{\blobs}{\multiblob}
\newcommand{\gnp}{\textsf{G}}
\newcommand{\Erdos}{Erd\H{o}s}
\newcommand{\Renyi}{R{\'e}nyi}
\newcommand{\bd}{\ensuremath{\partial}}
\DeclareMathOperator{\im}{im}
\newcommand{\betti}{\ensuremath{\beta}}
\newcommand{\tensor}{\otimes}
\newcommand{\Cl}[1]{\operatorname{Cl}{#1}}
\newcommand{\card}[1]{|#1|}
\newcommand{\mv}{Mayer-Vietoris }
\newcommand{\mvb}{\mv blowup complex}
\newcommand{\K}{K}
\newcommand{\C}{U}
\newcommand{\N}{N}
\newcommand{\M}{M}
\newcommand{\factor}{\card{\K^{\C}}/\card{\K}}
\newcommand{\ratio}{\factor}
\newcommand{\Parfor}{\kw{parallel for}}
\newtheorem{theorem}{Theorem}
\newtheorem{lemma}{Lemma}
\theoremstyle{definition}
\newtheorem{example}{Example}[subsection]
\definecolor{afra}{RGB}{198,168,208}
\definecolor{afrablue}{RGB}{143,166,215}
\definecolor{afragreen}{RGB}{182,215,112}
\definecolor{afrapurple}{RGB}{218,177,239}
\definecolor{darkgray}{gray}{0.3}
\definecolor{afrapurplelight}{RGB}{198,168,208}
\definecolor{afrabluedark}{RGB}{75,113,191}
\definecolor{afrabluelight}{RGB}{143,166,215}
\definecolor{afragreendark}{RGB}{154,191,75}
\definecolor{afragreenlight}{RGB}{182,215,112}
\definecolor{afrapurpledark}{RGB}{162,75,191}
\pgfplotsset{compat=newest}
\newcommand \listoftodos{\section*{Todo list} \@starttoc{tdo}}
\newcommand\l@todo[2]
\noindent \textit{#2}, \parbox{10cm}{#1}\par} \makeatother
\def\lbl#1#2{\begingroup
	    #2%
	    \def\@currentlabel{#2}%
	    \phantomsection\label{#1}\endgroup
	}
\title{Multicore Homology via \mv{}}
\begin{document}
%\date{\today}
\DRAFT
\begin{frontmatter}
 \author{Ryan H. Lewis\corref{cor}}
 \ead{rhl@stanford.edu}
 \address{Institute for Computational Mathematics and Engineering, Huang Building, Stanford University}
  \cortext[cor]{Principal Corresponding author}
  
\author{Afra~Zomorodian\corref{cor2}}
 \ead{afra@cs.dartmouth.edu}
 \address{The D. E. Shaw Group, 1166 Avenue of the Americas, Ninth Floor, New York, New York}
 
\begin{abstract}
In this work we investigate the parallel computation of homology using the \mv principle. We present a 
two stage approach for parallelizing persistence. In the first stage, we produce a cover of the input cell complex by overlapping subspaces.
In the second stage, we use this cover to build the \mvb{}, a topological space, which organizes the various subspaces needed for employing the \mv principle. 
Next, we compute the homology of each subspace in the blowup complex in parallel and then glue these results together in serial. 
We show how to use the persistence algorithm to organize these computations.
In the first stage, any algorithm can be used to produce a cover of the input complex. We describe an algorithm for producing a cover of a space with a simple structure and 
bounded overlap based on graph partitions. Additionally, we present a simplistic model for the problem of finding covers appropriate for parallel algorithms and show that finding such covers is \NPH{}.  
Finally, we present a second parallel homology algorithm. This algorithm avoids the explicit construction of the blowup complex saving space. 
We implement our algorithms for multicore computers, and compare them against each other as well as existing serial and parallel algorithms with a suite of experiments. 
We achieve roughly $8 \times$ speedup of the homology computations on a 10-dimensional complex with about 46 million simplices using 11 cores.
\end{abstract}
\begin{keyword}
Computational Topology, Algorithms, Theory
\end{keyword}
\end{frontmatter}
\section{Introduction}
\begin{figure}
\pgfplotstableread[col sep=comma]{pgf-speedup-figs/results/cover_homology/clique.11.22720.csv}\tablecovermblb
\pgfplotstableread[col sep=comma]{pgf-speedup-figs/results/concurrent_homology/clique.11.22720.csv}\tableconcurrentmblb
\pgfplotstableread[col sep=comma]{pgf-speedup-figs/results/phat_14_chunk/clique.11.22720.csv}\tablephatchunkmblb
\pgfplotstableread[col sep=comma]{pgf-speedup-figs/results/phat_14_ss/clique.11.22720.csv}\tablephatssmblb
%%compute first regression 
\pgfplotstablecreatecol[linear regression={x=num_threads, y=speedup}]{regression}{\tablecovermblb}
\xdef\slopeA{\pgfplotstableregressiona} 
\xdef\interceptA{\pgfplotstableregressionb}
%%compute second regression 
\pgfplotstablecreatecol[linear regression={x=num_threads, y=speedup}]{regression}{\tableconcurrentmblb}
\xdef\slopeB{\pgfplotstableregressiona} 
\xdef\interceptB{\pgfplotstableregressionb}
%%make percentages
\pgfmathparse{\slopeA*100}
\pgfmathprintnumberto[precision=0]{\pgfmathresult}{\efficiencyA}
\pgfmathparse{\slopeB*100}
\pgfmathprintnumberto[precision=0]{\pgfmathresult}{\efficiencyB}
%\fbox{
\begin{subfigure}[t]{.45\linewidth}
\begin{tikzpicture}[scale=1]
\begin{axis}[legend cell align=left,xlabel=\# of threads, ylabel=speedup factor, minor y tick num={3}, minor x tick num={1},legend style={legend pos=north west, font=\tiny}]
\legend{$\proc{Multicore-Homology}$,  $\proc{Heuristic-MH}$, $\proc{Chunk}$~\cite{bkr-cccph-13}, $\proc{Spectral-Sequence}$~\cite{bkr-cccph-13}, ideal}
\addplot table [x=num_threads, y=speedup, col sep=comma, skip coords between index={0}{1}] from  {\tableconcurrentmblb};
\addplot table [x=num_partitions, y=speedup, col sep=comma, skip coords between index={0}{1}] from {\tablecovermblb};
\addlegendentry{Heuristic \mv{}} %
\addplot table [x=num_partitions, y=speedup, col sep=comma, skip coords between index={0}{1}] from {\tablephatchunkmblb}; 
\addplot table [x=num_partitions, y=speedup, col sep=comma, skip coords between index={0}{1}] from {\tablephatssmblb};
\addplot[dash pattern=on 4pt off 1pt on 4pt off 4pt, domain=2:11]{x};
\addlegendentry{ideal speedup}
\end{axis}
\end{tikzpicture}
\caption{We achieve ${\sim}\efficiencyA$\% efficiency across 11 cores via \proc{Heuristic-MH} and ${\sim}\efficiencyB\%$ via \proc{Multicore-Homology}. 
We also plot the speedup factor of the parallel algorithms \proc{Chunk} and \proc{Spectral-Sequence} from Bauer et. al~\cite{bkr-cccph-13}.}
\label{fig:blobs-speedup}
\end{subfigure}
%}
\hspace{1cm}
%\fbox{
\begin{subfigure}[t]{.45\linewidth}
\vspace{-1.27cm}
\begin{tikzpicture}[scale=1]
\fbox{\includegraphics[width=.98\linewidth]{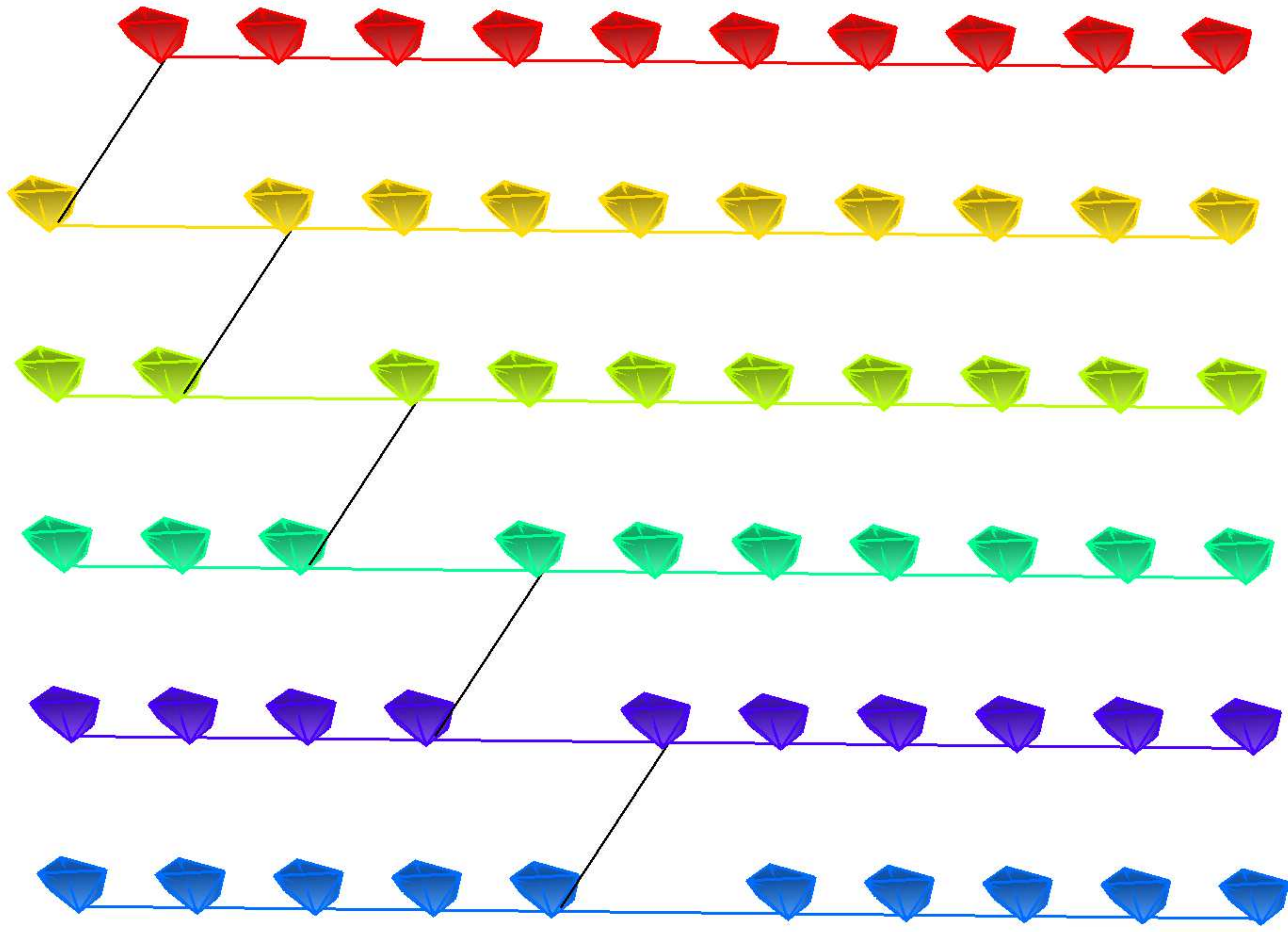}\vspace{1cm}}
\end{tikzpicture}
\vspace{1cm}
\caption{Shown is a portion of the simply connected input space $\multiblob$. {\blobs} contains ${\sim}22$K copies of a fully connected 10 dimensional complex on 11 vertices each connected to the next by a single edge.
Each color represents a portion of 7 sets of a cover by 12 pieces.}
\label{fig:blobs-vis}
\vspace*{\baselineskip}
\end{subfigure}
%}
\caption{On the left is the speedup in homology computation for the 10 dimensional complex with 45M simplices partially shown on the right.}
\label{fig:front-picture}
\end{figure}
In this paper, we present fast multicore algorithms for computing
the homology of arbitrary dimensional cell complexes over field
coefficients. Figure~(\ref{fig:front-picture}) shows the speedup factor
of our two algorithm for computing homology over $\Z_2$ coefficients
of the data set $\multiblob$, described in Section~\ref{sec:exp}. By decomposing the
space into the 11 pieces visualized in Figure~(\ref{fig:blobs-vis}),
we are able to reduce the boundary matrix of the input space in 
.37 seconds, approximately eight times faster than the 3 seconds 
necessary for serial computation.
All our timings are done on a 64-Bit GNU/Linux machine with dual, six core, 
2.93Ghz Intel X5670 CPUs, and hyperthreading disabled.

\subsection{Motivation}
We are motivated by \emph{topological data analysis} 
which attempts to extract a topological understanding of scientific data from 
finite sets of samples. Usually data analysis assumes that the input point cloud
comes from some underlying geometric space. Topological data analysis focuses on
the recovery of the lost topology of this underlying space \cite{c-tnd-09}. 
The classic pipeline for topological data analysis follows a two step process. First, we 
compute a combinatorial model approximating the structure of the underlying 
space. Second we compute topological invariants on these structures. One popular
invariant, persistent homology \cite{elz-tps-02, zc-cph-05}, captures multiscale
topological structure. Computing field homology, 
especially over $\Z_2$ coefficients, is an integral part of topological data 
analysis.

In this paper, we focus on developing a parallel algorithm to compute homology 
on multicore shared-memory machines. This algorithm is a first step toward a 
distributed-memory algorithm that will allow us to compute the persistent homology of 
massive structures on computer clusters.

\subsection{Prior Work}
There is a large literature on serial computation of integer homology.
Dumas et al.\@ review algorithms for computing integer homology that take 
advantage of the sparsity of boundary matrices derived from simplicial 
complexes~\cite{d-snf-03}. Their software is available within the GAP software 
package~\cite{GAP4}. Joswig surveys the computation of invariants, including 
homology for simplicial spaces with a focus on manifolds~\cite{j-csm-04}.
Kaczy{\'n}ski et al.\@ develop heuristics to compute cubical 
homology~\cite{kmm-ch-04}. Kaltofen et al.\@ provides a theoretical
investigation of randomized parallel algorithms for computing the Smith normal 
form~\cite{kks-snf-87,kks-snf-89} over finite fields and $\Q$, however, 
these algorithms are not useful in practice~\cite{ProofByAuthority}.

Any parallel computation of homology would require a decomposition of the
space into pieces.  The theory of \emph{spectral sequences} explains how to 
compute the homology of a space from its pieces. In this work, we decompose 
our input space using a \emph{cover} so the pieces correspond to \emph{subspaces} and their various intersections. 
The \mv spectral sequence expresses the relationship between the homology of these subspaces to the homology of the space itself.
This makes the \mv spectral sequence a natural gadget to study when developing algorithms for parallel homology~\cite{hatcher}. 
Merino et al.\@ use the \mv exact sequence to compute the homology of
three-dimensional simplicial complexes~\cite{bmlf-cmv-10}.  
Lipsky et al.\@ use the \mv spectral sequence in an attempt to derive a parallel algorithm~\cite{lsv-ss-11}.
Both works are theoretical in nature. The researchers do not address algorithmic issues of complexity, finding
covers for input, implementations of their algorithms, or any empirical results. 
The \mvb{} is the \emph{total complex} of the \mv spectral sequence. In other words the \mvb{} is a topological space which
encodes the data given as input to the spectral sequence. Its homology is equivalent to that of the original input space. 
Zomorodian and Carlsson show how computing homology of \mvb{} localizes the homology basis~\cite{zc-lh-08}.   

The \mv spectral sequence is not the only algebraic tool which is useful for parallel homology computation.
The spectral sequence of a filtration shows how a sequence of \emph{relative homology} computations may be carried out
in parallel on contiguous chunks of a boundary matrix to arrive at the homology of a space. 
Bauer, et al.\@ explore this approach to computing homology in parallel~\cite{bkr-cccph-13}. 

\subsection{Our Work}
In this paper we design and implement a divide and conquer framework for computing the field 
homology of a cellular space in parallel. Field homology is popular in 
topological data analysis since it can be computed in polynomial time and the persistence
algorithm exhibits linear-time behavior in practice~\cite{elz-tps-02,zc-cph-05}. 
Our framework relies on the \mvb{}, a spatial version of the \mv spectral sequence~\cite{zc-lh-08}. 
The \mvb{} is the \emph{total complex} of the terms of the first page of the 
\mv spectral sequence and its homology groups are isomorphic to that of the original space. 
In this work we show how to build the \mvb{} and compute its homology in parallel using the persistence algorithm. 
We note that while we restrict our attention to field homology our software could be modified to produce $\Z$-valued homology.

Our approach has two stages. In the first stage, we find a cover of the input space. In the second stage, we 
use this cover to build the blowup complex and compute its homology in parallel. The homology computation within the second stage 
may be viewed as the parallel computation of relative homology on chunks of the boundary matrix for the blowup complex. However, 
because of the structure of the blowup complex many relative computations are the same as their non-relative siblings that is,
they do not need to be further reduced against each other.

Since the first stage of the pipeline requires a cover of the input space, we investigate the general problem of finding covers of spaces.
In Section~\ref{sec:covers}, we identify a class of covers which lend themselves to efficient parallel
algorithms and model the problem of finding covers in this class as an optimization problem. 
We then show that solving this problem is \NPH{}. Motivated by this result, we instead provide a 
algorithm for producing covers with bounded overlap based on graph partitioning in Section~\ref{sec:pcover}.  
We may avoid building the blowup complex by using the cover to generate a new filtration on the original space for 
carrying out parallel computations without the blowup complex.

In Section~\ref{sec:exp}, we present the results of a suite of experiments using a multicore version of our parallel algorithms and 
provide experimental results. All of the techniques in this paper are deterministic. Our software and our datasets are 
\href{http://ctl.appliedtopology.org}{publicly available}.

\section{Background}
We begin with a review of simplicial complexes, homology, and blowup complexes.
We refer the read to Hatcher for background material in algebraic topology~\cite{hatcher}.
and to Zomorodian~\cite[Chapter 13]{z-ct-10} for computational topology. 
In principle the methods outlined in this paper generalize to any type of cellular space, 
however we restrict ourselves to simplicial complexes. 

\subsection{Preliminaries}
Let $[n] = \{ 0,1, \ldots ,n \}$ be the first $n+1$ natural numbers. This
definition is not conventional but we adopt the notation used
in previous work for continuity with prior work~\cite{zc-lh-08}. A \emph{multiset}
is a pair $(A,c)$ where $c: A \rightarrow \mathbb{N}$. A \emph{decomposition}
of a set $S$ is a collection of nonempty subsets of $S$ whose union is $S$.
A \emph{partition} of a set $S$ is a decomposition of $S$ by disjoint sets.
A \emph{graph} $G = (V,E)$ is a set $V$ of \emph{vertices}, and 
a set $E \subseteq V \times V$  of \emph{edges}. 
Suppose we have a graph $G = (V,E)$. A \emph{graph partition} is a partition 
$P = \{P_i\}_{i \in [n-1]}$ of $V$ into $n$ subsets. 
A \emph{cut} is a partition of $V$ into two sets $A$ and $B$. A \emph{vertex separator}
of a graph $G$ is a set of vertices $I$ such that the removal of $I$ from $G$ results
in a disconnected graph.  

A \emph{simplicial complex} is a collection $\K$ of finite sets called
\emph{simplices} such that if $\sigma \in \K$ and $\tau \subseteq \sigma$ then 
$\sigma \in \K$. 
We say that $\tau$ is a \emph{face} of $\sigma$, its \emph{coface}. A simplex
is \emph{maximal} if it has no proper coface in $\K$. The set of maximal cells of a 
simplicial complex $\K$ is $\M(\K)$. If $\card{\sigma} = k+1$
then $\sigma$ is a $k$-simplex, it has \emph{dimension} $k$, denoted 
$\dim{\sigma} = k$. We say that $\K$ is \emph{$d$-dimensional} if 
$d = \max_{\sigma \in \K} \dim{\sigma}$. 
Given a simplicial complex $\K$ the set of maximal 
cells can be enumerated in $O(md)$ time.

Suppose we have a subset $L \subseteq K$.  $L$ is a \emph{subcomplex}
if it is a simplicial complex. The \emph{closure} of $L$ is 
$\Cl(L) = \{ \tau \mid \tau \subseteq \sigma \in L\}$ and is a simplicial complex. The 
\emph{$k$-skeleton} of a complex $\K$ is the set of all simplices
of dimension less than or equal to $k$. Note that the 1-skeleton of
any complex may be viewed as a graph. 
Let $\Delta^n$ be the $n$-simplex defined on $[n]$.  We note that 
$\Delta^n$ is traditionally defined in a geometric setting and is 
called the \emph{standard $n$-simplex}~\cite{hatcher}, although we 
are using an abstract version here for our purposes. For any
\emph{indexing set} $J \subseteq [n]$, $\Delta^J$ is the $(\card{J}-1)$ 
dimensional face of $\Delta^n$  that is defined on $J$.
We define a \emph{filtration} of $\K$ to be a partial ordering on the simplices of
$\K$ such that every prefix of the ordering is a subcomplex and denote
it as $\Filt{\K}$.
Given a simplicial complex $\K$, An \emph{open cover}
of $\K$ is a decomposition of $\K$ and when each
cover set is closed we call the cover a \emph{closed cover} $\C$. 
Except where explicitly specified all covers in this work are closed. The \emph{nerve} $\N(\C)$
of a cover $\C$ is the simplicial complex on $[\card{\C}-1]$ whose $k$-simplices 
represent the non-trivial intersections of subsets of $\C$ of size $k+1$.
The nerve is a subcomplex of the standard $n$-simplex and so we 
denote its simplices by $\Delta^J$ where $J \subseteq [\card{\C}-1]$. 
It is convenient to encode the cover $\C$ as a map from $\K$ to $\N(\C)$ where each 
simplex $\sigma \in \K$ is mapped to $\N(\sigma)$ the simplex in $\N(\C)$ which lists
the cover sets containing $\sigma$.

A simplicial complex may be viewed as the result of gluing simplices of 
different dimensions along common faces. Other types of complexes are defined 
similarly using different types of \emph{cells}. 
Such \emph{cellular} complexes include $\Delta$-complexes, \emph{cubical} 
complexes, \emph{simplicial sets}, and \emph{CW-complexes}, 
to name a few~\cite{ez-ssc-50,hatcher,kmm-ch-04,m-soat-68}.
In this paper, we restrict to simplicial complexes as input, although our 
methods generalize easily to other types of complexes. 

\subsection{Homology}
In this section, we describe the homology of cellular spaces over 
field coefficients. Homology, however, is an invariant of arbitrary topological 
spaces and may be computed over arbitrary coefficient rings~\cite{hatcher}.  
Suppose we are given a finite cellular complex $\K$ and a field $k$. 
The \emph{$n$th chain vector space $C_n$} is the $k$-vector space generated by 
the set of $n$-dimensional cells of $K$, its \emph{canonical basis}.  
Suppose we are given a linear \emph{boundary operator} 
$\bd_n\colon C_n \rightarrow C_{n-1}$ such that 
$\bd_n \circ \bd_{n-1} \equiv 0$ for any $n$.  
The boundary operator connects the chain vector space into a 
\emph{chain complex $C_*$}:
\begin{linenomath*}
\begin{equation*}
  \cdots \rightarrow             C_{n+1}
         \xrightarrow{\bd_{n+1}}  C_n
         \xrightarrow{\bd_n}     C_{n-1}
         \rightarrow \cdots .
\label{eqn:chaincomplex}
\end{equation*}
\end{linenomath*}
Given any chain complex, the \emph{$n$th homology vector space $H_n$} is:
\begin{linenomath*}
\begin{equation}
  \label{eqn:homology}
  H_n = {\ker{\bd_n}}\,/\,{\im{\bd_{n+1}}}, 
\end{equation}
\end{linenomath*}
where $\ker(.)$ and $\im(.)$ are the \emph{kernel} and \emph{image} of $\bd$, 
respectively.
Each homology vector space is characterized fully by its \emph{Betti number}, 
$\betti_n = \dim{H_n}$.  
We now only need to define boundary operators to get homology. 
For simplicial homology, we begin by defining the action of the boundary operator on any
$n$-simplex $[v_0,\ldots,v_n] \in \K$:
\begin{linenomath*}
\begin{equation*}
\bd_n [v_0,\ldots,v_n] = \sum_i (-1)^i [v_0,\ldots,\hat{v_i},\ldots,v_n],
\end{equation*}
\end{linenomath*}
where $\hat{v_i}$ indicates that $v_i$ is deleted from the vertex 
sequence. The boundary operator is the linear extension of the above action.

Over field coefficients, homology is a vector space characterized by its dimension, 
so we may compute homology using \emph{Gaussian elimination}~\cite{uhlig}.  
In practice, we use the 
\emph{persistence algorithm}~\cite{elz-tps-02,zc-cph-05}.
This algorithm can compute the homology of any \emph{based persistence complex}
~\cite{zc-lh-08},
a class that includes simplicial complexes as well as the 
blowup complex. As input, this algorithm requires a basis for 
the chain complex $C_*$, a boundary operator $\bd_n$, and a filtration 
on the basis elements. The algorithm proceeds by determining if
the addition of a cell into the complex creates a new homology class
or annihilates a homology class previously created. The result is 
a pairing between cells which create homology and the corresponding
cell which destroy's that homology. Except, if a homology class is never killed, in which case it is left unpaired.
$\betti_i$ is the number of unpaired $i$-cells.

We focus on characterizing the three inputs needed for computing the homology 
of a blowup complex using the persistence algorithm.
\subsection{Blowup Complex}
\begin{figure*}
\begin{subfigure}[t]{.33\textwidth}
\centering
%\hspace*{-1cm}%
\begin{tikzpicture}[scale=.5, y=0.6pt, x=.75pt]
     %row 1
     \node[font=\large] at (-50, 800) {$(K, U)$};
     \draw[fill=afragreen, draw = black,  line width=2]  (47,800) circle (10pt);  
     \draw[fill=afragreen, draw = black, line width=2]  (117,800) circle (10pt);   
     \draw[fill=afragreen, draw = black, line width=2]  (190,800) circle (10pt); 
     \draw[fill=afragreen, draw = black, line width=2]  (261,800) circle (10pt); 
     \begin{pgfonlayer}{edges}
            \path[draw=black,fill=black,line width=2] (47,800) -- (117, 800);
            \path[draw=black,fill=black,line width=2] (117,800) -- (190, 800);
            \path[draw=black,fill=black,line width=2] (190,800) -- (261, 800);
      \end{pgfonlayer}      
      \draw[draw, color=afrablue, fill=none, line join=round,draw opacity=0.978,line width=2] (117, 800) ellipse (100 and 45);
      \draw[draw, color=afrapurple, fill=none, line join=round,draw opacity=0.978,line width=2] (190, 800) ellipse (100 and 45);
         %row 2 
    \node[font=\large] at (-40, 740) {$K^0$};
     \draw[fill=afragreen, draw = black, line width=2]  (47,730) circle (10pt);  
     \draw[fill=afragreen, draw = black, line width=2]  (117,730) circle (10pt);   
     \draw[fill=afragreen, draw = black, line width=2]  (190,730) circle (10pt); 
     \begin{pgfonlayer}{edges}
            \path[draw=black,fill=black,line width=2] (47,730) -- (117, 730);
            \path[draw=black,fill=black,line width=2] (117,730) -- (190, 730);
      \end{pgfonlayer}   
               %row 3
     \node[font=\large] at (-50, 680) {$K^1$};
     \draw[fill=afragreen, draw = black, line width=2]  (117,680) circle (10pt);   
     \draw[fill=afragreen, draw = black, line width=2]  (190,680) circle (10pt); 
      \draw[fill=afragreen, draw = black, line width=2]  (261,680) circle (10pt); 
         \begin{pgfonlayer}{edges}
            \path[draw=black,fill=black,line width=2] (117,680) -- (190, 680);
            \path[draw=black,fill=black,line width=2] (190,680) -- (261, 680);
      \end{pgfonlayer}   
          %row 4
\node[font=\large] at (-50, 620) {$K^{[1]}$};
     \draw[fill=afragreen, draw = black, line width=2]  (117,620) circle (10pt);   
     \draw[fill=afragreen, draw = black, line width=2]  (190,620) circle (10pt); 
         \begin{pgfonlayer}{edges}
            \path[draw=black,fill=black,line width=2] (117,620) -- (190, 620);
      \end{pgfonlayer}     
\end{tikzpicture}
\caption{Space and Cover}
\label{fig:space-n-cover}
\end{subfigure}
%\hfill
\begin{subfigure}[t]{.33\textwidth}
\centering
\begin{tikzpicture}[scale=.5, y=0.6pt, x=.75pt]
    %disj 0
    \begin{scope}[shift={(143,80)}]
    \node[font=\large] at (300, 730) {$K^0 \times \Delta^{0}$};
     \draw[draw, color=afrablue, fill=none, line join=round,draw opacity=0.978,line width=2] (117, 730) ellipse (100 and 45);
     \draw[fill=afrablue, draw = black, line width=2]  (47,730) circle (10pt);  
     \draw[fill=afrablue, draw = black, line width=2]  (117,730) circle (10pt);   
     \draw[fill=afrablue, draw = black, line width=2]  (190,730) circle (10pt); 
     \begin{pgfonlayer}{edges}
            \path[draw=black,fill=black,line width=2] (47,730) -- (117, 730);
            \path[draw=black,fill=black,line width=2] (117,730) -- (190, 730);
      \end{pgfonlayer}   
      \end{scope}
      %disj  1
       \begin{scope}[shift={(0,0)}]
         \node[font=\large] at (380, 680) {$K^1 \times \Delta^{1}$};
       \draw[draw, color=afrapurple, fill=none, line join=round,draw opacity=0.978,line width=2] (190, 680) ellipse (100 and 45);
     \draw[fill=afrapurple, draw = black, line width=2]  (117,680) circle (10pt);   
     \draw[fill=afrapurple, draw = black, line width=2]  (190,680) circle (10pt); 
      \draw[fill=afrapurple, draw = black, line width=2]  (261,680) circle (10pt); 
         \begin{pgfonlayer}{edges}
            \path[draw=black,fill=black,line width=2] (117,680) -- (190, 680);
            \path[draw=black,fill=black,line width=2] (190,680) -- (261, 680);
      \end{pgfonlayer}   
            \end{scope}
\end{tikzpicture}
\caption{Local pieces of the blowup complex.}
	\label{fig:local-pieces}
\end{subfigure}
%\hfill
\begin{subfigure}[t]{.33\textwidth}
\centering
\begin{tikzpicture}[scale=.5, y=0.6pt, x=.75pt]
    %disj 0
    \begin{scope}[shift={(143,80)}]
    \node[font=\large] at (300, 730) {$K^0 \times \Delta^{0}$};
     \draw[draw, color=afrablue, fill=none, line join=round,draw opacity=0.978,line width=2] (117, 730) ellipse (100 and 45);
     \draw[fill=afrablue, draw = black, line width=2]  (47,730) circle (10pt);  
     \draw[fill=afrablue, draw = black, line width=2]  (117,730) circle (10pt);   
     \draw[fill=afrablue, draw = black, line width=2]  (190,730) circle (10pt); 
     \begin{pgfonlayer}{edges}
            \path[draw=black,fill=black,line width=2] (47,730) -- (117, 730);
            \path[draw=black,fill=black,line width=2] (117,730) -- (190, 730);
      \end{pgfonlayer}   
      \end{scope}
      %disj  1
       \begin{scope}[shift={(0,0)}]
         \node[font=\large] at (380, 680) {$K^1 \times \Delta^{1}$};
       \draw[draw, color=afrapurple, fill=none, line join=round,draw opacity=0.978,line width=2] (190, 680) ellipse (100 and 45);
     \draw[fill=afrapurple, draw = black, line width=2]  (117,680) circle (10pt);   
     \draw[fill=afrapurple, draw = black, line width=2]  (190,680) circle (10pt); 
      \draw[fill=afrapurple, draw = black, line width=2]  (261,680) circle (10pt); 
         \begin{pgfonlayer}{edges}
            \path[draw=black,fill=black,line width=2] (117,680) -- (190, 680);
            \path[draw=black,fill=black,line width=2] (190,680) -- (261, 680);
      \end{pgfonlayer}   
            \end{scope}
      %blowup edges 
      \begin{scope}[shift={(143,80)}]
     \begin{pgfonlayer}{edges}
            \path[draw=black,fill=black,line width=2] (47,730) -- (47, 600);
            \path[draw=black,fill=black,line width=2] (117,730) -- (117, 600);
      \end{pgfonlayer}     
          \node[font=\large] at (280, 665) {$K^{[1]} \times \Delta^{[1]}$};
        \begin{pgfonlayer}{quadcell}
      \draw [fill=afragreen,  preaction={fill, afragreen}, pattern=north west lines, pattern color=black] (47, 600) rectangle (117, 730);
	\end{pgfonlayer}
	\end{scope}
\end{tikzpicture}
		\caption{The blowup complex.}
		\label{fig:blowup}
\end{subfigure}
%\begin{subfigure}
%\caption{Persistence Barcode. Colors represent homology groups computed
%	   on each piece of the blowup complex. Stacked colors represent parallelism}
%%	\label{fig:barcode}
%%	\def\svgwidth{1.1in}
%%	\input{figs/barcode.pdf_tex}
%%\end{subfigure}
\caption{Our approach. We are given a space equipped with a 
	 cover~(\protect\subref{fig:space-n-cover}), the former represented by a path with
         four vertices and three edges and the latter represented by ovals.  
	 First, at time $(t = 0)$ we blowup up the space into 
	 local pieces~(\protect\subref{fig:local-pieces}), each local piece is a copy of 
	 the corresponding cover set, then, at $(t = 1)$ we glue together 
	 duplicated simplices by adding in the blowup cells, rendering them 
	 homologically equivalent, which gives us the blowup 
	 complex~(\protect\subref{fig:blowup}).
}
\label{fig:vignette}
\end{figure*}
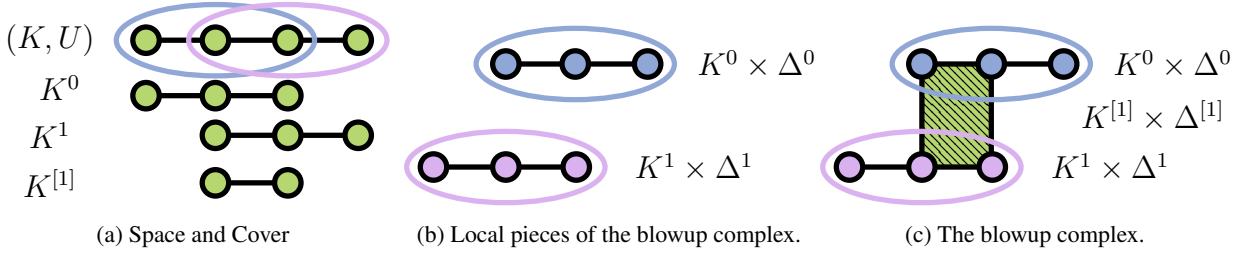
Like homology, the blowup complex may be defined for arbitrary topological 
spaces~\cite{zc-lh-08}, but in this paper we focus on blowups of simplicial 
complexes. For a longer exposition of the \mvb{} we refer the reader to Zomorodian
\& Carlsson~\cite{zc-lh-08}. Given a simplicial complex $K$ and cover 
$\C = \{\C_i\}_i$ of $n$ subcomplexes, let 
$\K^J = \bigcap_{k \in J} \C_j$. The \emph{Mayer-Vietoris blowup complex} is:
\begin{linenomath*}
\begin{align*}
\K^U &= \bigcup_{\emptyset \not = J \subseteq [n-1]} \K^J \times \Delta^J,
\end{align*}
\end{linenomath*}
where $\times$ is the Cartesian product~\cite{zc-lh-08} and $\Delta^J$ is a face of $\N(\C)$.
 % Do example here before any further talk.
\begin{example}
\label{ex:blowup}
Suppose we have a space $\K$ with cover $\C = \{ \C_0, \C_1 \}$ as is shown on 
the top of Figure~(\ref{fig:space-n-cover}), where we use a line as a 
representative space and ovals to indicate cover sets, and the four vertices
of the line are labeled from left to right as $a, b, c, d$ respectively.
The cover defines the intersection $\K^{[1]} = \K^{ \{ 0 , 1 \} }$. 
The corresponding blowup is shown in in Figure~(\ref{fig:blowup}). We list each of
the relevant pieces of $\K^{\C}$ as well as the nerve of the cover where we denote simplices as strings for 
brevity.  
\begin{linenomath*}
\begin{align*}
\N(\C) &= \{ 0, 1, 01 \} \\
\K^0 \times \Delta^{\{0\}} &= \{a,b,c,ab,bc\} \times \{0\}, \\
\K^1 \times \Delta^{\{1\}} &= \{b,c,d,bc,bd\} \times \{0\}, \\
\K^{[1]} \times \Delta^{[1]} &=\{b,c,bc\} \times \{01\}. 
\end{align*}
\end{linenomath*}
\end{example}
% key property
Our work is based on the following key property.
The blowup complex $\K^U$ has the same homology as its base complex $\K$ in any 
dimension: $H_n(\K^U) \cong H_n(\K)$ for any $n$~\cite[Lemma 1]{zc-lh-08}. 
Our approach then is to compute homology of the blowup complex 
instead of the base complex. The blowup has a structure that allows 
for computation in parallel, unlike the base complex.

% simplicial
To compute the homology of the blowup complex, we may interpret the definition 
above in two different ways.  
At the space level, we may view each cell of the blowup complex as a product 
of two simplices $\sigma \times \tau$, where 
$\sigma \in \K$ and $\tau \in \N(\C) \subseteq \Delta^{[n]}$.  
For example, the product of two edges, $bc \times 01$, gives us a 
quadrilateral cell in Example~\ref{ex:blowup}.  
While we may then triangulate the blowup complex to get a simplicial complex in order 
to compute its homology, this is computationally prohibitive, due to the need for triangulation. 
Luckily this approach is also not necessary. Alternatively, we 
examine the chain complex attached to the blowup complex.

% chain
A basis for $C_n(\K^\C)$ is the set composed of elements 
$\sigma \tensor \Delta^J$ for all $\emptyset \not = J \subseteq [n-1]$ and 
simplices $\sigma \in K^J$ where $\dim{\sigma} + \dim{\Delta^J} = n$.  The notation
$\tensor$ denotes \emph{tensor product}. Recall that the tensor product of two 
vector spaces is obtained by taking a quotient of the free vector space
on the cartesian product~\cite[\textrm{Page }218]{hatcher}.
We define the boundary operator as~\cite[\textrm{Lemma }4]{zc-lh-08}:
\begin{linenomath*}
\begin{align*}
\bd{\left (\sigma \tensor \Delta^J \right)}
&=
\bd{\sigma} \tensor \Delta^J + 
(-1)^{\dim{\sigma}}\sigma \tensor \bd\Delta^J.
\end{align*}
\end{linenomath*}
Here, we are defining a boundary operator for the blowup complex on the left 
using the boundary operators on the right, all of which are simplicial and were 
defined in the previous section. 
\begin{example}
The boundary of the quadrilateral cell 
$bc \tensor 01$: in Example~\ref{ex:blowup} is:
\begin{linenomath*}
\begin{align*}
\bd{\left (bc \tensor 01 \right)}
&=
\bd{(bc)} \tensor 01 - bc \tensor \bd(01) \\
&= c \tensor 01 - b \tensor 01 - bc \tensor 1 + bc \tensor 0.
\end{align*}
\end{linenomath*}
\end{example}
Having specified the basis for the chain complex and a boundary operator of 
the blowup complex, we now need a filtration on the basis elements in order to use the 
persistence algorithm. In principle an arbitrary filtration will do.
But for computing homology in parallel, we will specify a particular filtration 
whose structure mirrors the structure of the blowup complex. 
%We end this section by relating the \mvb{} to the \mv spectral sequence. For those uninterested
%in this connection they may continue to Section~\ref{sec:blowup_structure}.
%
%\begin{figure}[h!]
%\input{spectral_sequence}
%\caption{(Left) The general form for the $E^0$ page of the \mv{}Spectral Sequence. The superscript $0$ is 
%omitted from each term in the $E^0$ page. The direct sum of the vector spaces on each diagonal
%correspond to the chain groups of each dimension within the blowup complex.
%(Right) A pictorial form of the $E^0$ page of the \mv{}Spectral Sequence instantiated on Example~(\ref{ex:blowup}) }
%\end{figure}
%\subsubsection{The \mvb{} and the \mv{}Spectral Sequence}
\section{Blowup Structure}
\label{sec:blowup_structure}
%intro
The filtration of the blowup complex has two phases,  the \emph{local} and the
 \emph{global} phase. In the local phase, the complex explodes into multiple 
pieces, representing the disjoint union of each set in the cover, as in Figure~(\ref{fig:local-pieces}). 
This means that we have potentially multiple versions of a simplex 
if it lies in an intersection of two sets in the cover. For example, since edge $bc$ falls within both sets in the 
cover in Figure~(\ref{fig:space-n-cover}), it is represented by two cells 
$bc \times 0$ and $bc \times 1$. The pieces at the local stage are disjoint, so 
we may compute the homology of the pieces in parallel. 

The global phase specifies cells that glue the different versions of the 
original simplices together, rendering them homologically equivalent.  
For example, in Figure~(\ref{fig:blowup}), the cell 
$b \times 01$ connects $b \times 0$ and $b \times 1$.  

To describe this filtration on the blowup complex, we assume that we have an 
arbitrary filtration $\Filt{\K}$ on the simplices of our input complex $\K$.  In practice, 
we often label the vertices of a complex using numbers or letters and use the lexicographic 
ordering of the vertices to generate a filtration on the complex. We use the same procedure 
with $\N(\C)$ as its vertices are numbered by definition.

Given a filtration $\Filt{\K}$ on $\K$ and $\Filt{\N(\C)}$ on $\Delta^n$ we define a partial order $\Filt{{\K^\C}}$ by 
ordering all cells in the local phase before those in the global phase. This amounts to 
comparing two cells $\sigma \times \Delta^M$ and $\tau \times \Delta^N$ by  
comparing the second factor according to $\Filt{\N(\C)}$. We may complete this partial order to a filtration by then comparing the 
first factor according to $\Filt{\K}$. 
\begin{example}
Figure~(\ref{fig:blowup}) has the following filtration: 
\begin{linenomath*}
\begin{equation*}
(\overbrace{a \times 0, b \times 0 ,c \times 0,  ab \times 0, bc \times 0}^
{\textrm{Local Piece \#0 } (t=0)},
\overbrace{b \times 1 , c \times 1, d \times 1, bc \times 1, cd \times 1}^
{\textrm{Local Piece \#1 }(t=0)},
\overbrace{b \times 01, c \times 01, bc \times 01}^
{\textrm{Global Piece } (t=1)}).
\end{equation*}
\end{linenomath*}
\end{example}
%%algorithms figure
\begin{figure}
\centering
\begin{minipage}[t]{.45\textwidth}
\begin{codebox}
\Procname{$\proc{Multicore-Homology}(\K,p)$}
 \li  $\C \gets \proc{Cover}(\K, p)$
 \li  $\K^{\C} \gets \proc{Build-Blowup-Complex}(\K, \C)$
 \li  $\Parfor$   $\Delta^J \in \N(\C)$
 \li  \Do $\proc{Pair-Cells}(\K^{J} \times \Delta^{J})\footnotemark$
      \End
\li $\For$ $d > 0$
 \li \Do $\For$ $\Delta^J \in \N(\C)$ a $d$-cell.
 \li  \Do $\proc{Pair-Cells}(\Cl{(\K^{J} \times \Delta^{J})})$ 
\end{codebox}
\end{minipage}
\begin{minipage}[t]{.45\textwidth}
\begin{codebox}
\Procname{$\proc{Build-Blowup-Complex}(\K, \C)$}
\li $\K^{\C} \gets \emptyset$
\li $\Parfor$ $\sigma \in \K$
\li \Do $\For$ $\tau \subseteq \C[\sigma]$
\li \Do $\K^{\C} \gets \K^{\C} \cup (\sigma \times \tau)$ 
\end{codebox}
\end{minipage}
\caption{Psuedocode for computing the blowup complex and its homology in parallel. $\proc{Cover}$ can be any algorithm for generating a cover of $\K$ by $p$ subspaces. The procedure $\proc{Pair-Cells}$ is defined in the Computational Topology section of the Algorithms and Theory of Computation Handbook \cite[\textrm{Page } 3-17]{z-ct-10}.}
\label{fig:multicore-code}
\end{figure}
\footnotetext{When the list of cells given as input to \proc{Pair-Cells} is not a sub complex computation should be interpreted as relative homology computation by ignoring
elements of the boundary which are not given in the input.}
The Algorithm in Figure~(\ref{fig:multicore-code}) shows how to build the blowup complex and compute its homology in parallel. 
The procedure $\proc{Build-Blowup-Complex}$ runs in parallel and has parallel running time $O(2m/p + p)$ time where $m = \card{\K^{\C}}$ and $p$ is the number of processors available. In practice $\proc{Build-Blowup-Complex}$ not only produces a blowup complex but also the filtration of the blowup complex prescribed above. 

The size of the blowup complex depends on the cover. In the worst case, all of 
the simplices in a space $\K$ are contained within all $n$ sets of the cover 
$\C$. In this case, for each simplex $\sigma \in \K$ we have a corresponding 
product cell $\sigma \times \Delta^n$, which has $2^n$ faces. That is, the 
blowup complex \emph{blows up} $\K$ to be $2^n$ times larger, thus deserving 
its name. Therefore, it is imperative to find a cover which minimizes blowup.

\section{Covers}
\label{sec:covers}
Given a simplicial complex $\K$, our goal is to compute its homology.  
Our approach, as illustrated in Figure~(\ref{fig:vignette}), is to find a cover, 
build the associated blowup complex, and compute the homology of the blowup complex in 
parallel. We have now explained all the steps of this approach except how to find a cover. 
We begin in Section~\ref{sec:hardness} by identifying 
properties of covers that lead to efficient computation.
We state an optimization problem over covers which 
minimizes the size of the blowup of a complex. We then show that this optimization problem is \NPH{}.  
In Section~\ref{sec:partition-based-covers}, we describe an algorithm 
that generates covers which have a simple structure, and bounded overlap 
based on graph partitions. We end the section by showing how a partition of the 
0-cells of a complex can be lifted to a partition of a filtration on the complex 
which can be used to compute homology in parallel without building the blowup complex.
\subsection{Minimum Blowups}
\label{sec:hardness}
In this section, we formalize the problem of finding covers
that minimize blowup size. We show that this problem is \NPH{}, and its 
decision-variant, \NPC{}.

It should be clear that seek a cover which does not yield a large blowup complex. 
To quantify blowup, we define the $\emph{blowup factor}$ as the ratio: 
$\factor.$ We search for a cover $\C$ of size $p$ that minimizes the blowup 
factor. Since we intend to compute the homology of each cover set in parallel,
the number of cover sets should be the number $p$ of available 
processors. Finally, each cover set should be approximately the same 
size. There are many ways of modeling this last constraint. 
We model it by enforcing that no cover set should be larger than a fixed fraction
$\alpha$ of the size of the input complex, where $\alpha \in (\frac{1}{p},1)$. 
Putting together all of the desired properties of blowups, we have the following 
optimization problem stated for $p = 2$ and $\alpha \in (\frac{1}{2}, 1)$:
\begin{description}
\addtolength{\itemsep}{-.7\baselineskip}
\item[\textsc{Problem:}]  \ablp 
\item[\textsc{Instance:}] A simplicial complex $K$
\item[\textsc{Goal:}] Find a cover $\C$ of $\K$ with $2$ elements such that: 
\[ \max{\card{\C_i}} \leq \alpha\card{\K} \textrm{ and } \factor \textrm{ is minimized.} \]
\end{description}
Our goal is to show that this problem is $\NPH{}$ and its decision problem variant $\NPC{}$.
For the decision problem variant to be $\NPC{}$ we need to show that $\factor$ may
be evaluated in polynomial time. Recall that $\K^{\C}$ might be exponentially larger than $\K$. 
For covers by two sets we may employ the following lemma.
\begin{lemma}
\label{lem:char-blowup-sol}
Let $\K$ be a complex and let $\C$ be a cover $\K$ 
of size $p > 1$. Suppose that the intersection
of any three sets in $\C$ vanishes. Then
\begin{equation*}
\factor = 1 + 2\frac{\card{I}}{\card{\K}}. 
\end{equation*}
where $I = \bigcup_{i \neq j}{\C_i \cap \C_j}$.
\end{lemma}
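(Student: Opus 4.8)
The plan is to prove the identity by directly counting the cells of the blowup complex and then repackaging that count in terms of $\card{\K}$ and $\card{I}$. The starting point is the basis description of $C_*(\K^{\C})$ given earlier: the cells of $\K^{\C}$ are exactly the product cells $\sigma \times \Delta^J$, in bijection with pairs $(\sigma, J)$ where $\emptyset \neq J \subseteq [p-1]$ and $\sigma \in \K^J = \bigcap_{k \in J}\C_k$. Summing over all such $J$ therefore gives
\[
\card{\K^{\C}} = \sum_{\emptyset \neq J \subseteq [p-1]} \card{\K^J}.
\]
So the lemma reduces to evaluating this sum under the stated hypothesis; the ratio $\factor$ follows immediately by dividing through by $\card{\K}$.

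Next I would use the vanishing-triple-intersection hypothesis to collapse the sum. If $\card{J} \geq 3$, then $\K^J$ is contained in an intersection of three cover sets and is hence empty, so every term with $\card{J} \geq 3$ drops out. Only singletons and pairs survive, giving $\card{\K^{\C}} = \sum_{i} \card{\C_i} + \sum_{i<j} \card{\C_i \cap \C_j}$, where the first sum ranges over $J = \{i\}$ (so $\K^J = \C_i$) and the second over $J = \{i,j\}$ (so $\K^J = \C_i \cap \C_j$). I would then reorganize each sum as a sum over simplices: writing $m(\sigma)$ for the number of cover sets containing $\sigma$, the hypothesis together with the fact that $\C$ is a cover forces $m(\sigma) \in \{1,2\}$ for every $\sigma \in \K$, and $I$ is precisely the set of simplices with $m(\sigma) = 2$. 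A double-counting argument then yields $\sum_i \card{\C_i} = \sum_\sigma m(\sigma) = \card{\K} + \card{I}$ and $\sum_{i<j}\card{\C_i\cap\C_j} = \sum_\sigma \binom{m(\sigma)}{2} = \card{I}$, since a simplex in exactly two sets contributes to exactly one pairwise intersection and one in a single set contributes to none.

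Combining the two pieces gives $\card{\K^{\C}} = \card{\K} + 2\card{I}$, and dividing by $\card{\K}$ produces the claimed formula. I do not expect any genuinely hard step here; the only points requiring care are (i) that $\card{\cdot}$ should be read as counting product cells of the blowup rather than simplices of some triangulation, so that the basis description applies verbatim, and (ii) the clean dichotomy $m(\sigma)\in\{1,2\}$, which is exactly what the vanishing of all triple intersections buys us and which is what makes the two double-counts come out to $\card{\K}+\card{I}$ and $\card{I}$ respectively. Everything else is routine bookkeeping.
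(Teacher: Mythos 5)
Your proof is correct and follows the same route as the paper, which simply asserts that the identity ``follows directly from the product cell definition of $\K^{\C}$''; your count of product cells $\sigma \times \Delta^J$ with $\card{J}\leq 2$, via the dichotomy $m(\sigma)\in\{1,2\}$, is exactly the bookkeeping that one-line proof leaves implicit. Each $\sigma \in I$ contributes three cells ($\sigma\times\{i\}$, $\sigma\times\{j\}$, $\sigma\times\Delta^{\{i,j\}}$) while each $\sigma \notin I$ contributes one, giving $\card{\K^{\C}} = \card{\K} + 2\card{I}$ as you found.
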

\begin{proof}
This follows directly from the product cell definition of $\K^\C$.
\end{proof}
\noindent Now we observe an important necessary condition of optimal solutions to \ablp{}.
\begin{lemma}
\label{lem:blowup-sol-max-simplices}
Given a complex $\K$ and $\C = \{\C_1, \C_2\}$ be an optimal solution of  \ablp{}, then 
$\C$ is a partition of $\M(\K)$ the maximal cells of $\K$.
\end{lemma}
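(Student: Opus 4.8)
The plan is to argue by contradiction, using Lemma~\ref{lem:char-blowup-sol} to convert ``minimize the blowup factor'' into ``minimize the overlap.'' Since $\C$ has exactly two sets, the triple-intersection hypothesis of Lemma~\ref{lem:char-blowup-sol} holds vacuously, so $\factor = 1 + 2\card{I}/\card{\K}$ with $I = \C_1 \cap \C_2$. As $\card{\K}$ is fixed, an optimal feasible cover is precisely one minimizing $\card{I}$ subject to the balance constraint. I would then show that an optimum can never place a maximal cell into $I$, and read the partition of $\M(\K)$ off from this fact.

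First I would suppose, toward a contradiction, that some maximal cell $\sigma \in \M(\K)$ satisfies $\sigma \in \C_1 \cap \C_2$, and define the modified cover $\C' = \{\C_1, \C_2'\}$ with $\C_2' = \C_2 \setminus \{\sigma\}$. The core of the argument is an exchange step verifying that $\C'$ is still feasible but has strictly smaller overlap. Feasibility splits into three checks: (i) $\C_2'$ is a subcomplex, which holds because $\sigma$ is maximal and so has no proper coface in $\K$ and hence none in $\C_2$, meaning no surviving cell has $\sigma$ as a face and closure under taking faces is preserved; (ii) $\C'$ still covers $\K$, since the only deleted cell is $\sigma$ and $\sigma$ remains in $\C_1$, so $\C_1 \cup \C_2' = \C_1 \cup \C_2 = \K$; and (iii) the balance constraint survives, as $\card{\C_2'} < \card{\C_2} \le \alpha\card{\K}$ while $\C_1$ is untouched. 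Because $\sigma$ lay in both sets and was the only cell removed, $I' = \C_1 \cap \C_2' = I \setminus \{\sigma\}$, so $\card{I'} = \card{I}-1$; by Lemma~\ref{lem:char-blowup-sol} the blowup factor strictly decreases, contradicting optimality of $\C$.

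Hence no maximal cell lies in $I$, so $\C_1 \cap \M(\K)$ and $\C_2 \cap \M(\K)$ are disjoint. They cover $\M(\K)$ because $\C_1 \cup \C_2 = \K \supseteq \M(\K)$, and each is nonempty: if, say, $\C_2$ contained no maximal cell, then $\C_1$ would be a subcomplex containing every maximal cell and therefore, by face-closure, all of $\K$, forcing $\card{\C_1} = \card{\K} > \alpha\card{\K}$ and violating balance since $\alpha < 1$. Thus $\{\C_1 \cap \M(\K), \C_2 \cap \M(\K)\}$ is a genuine partition of $\M(\K)$. The only step that really requires care is (i): confirming that deleting a maximal simplex leaves a valid simplicial complex is exactly where maximality is indispensable, whereas the cover property rests only on $\sigma \in \C_1$, the balance constraint can only relax under deletion, and nonemptiness of both parts follows from the strict inequality $\alpha < 1$.
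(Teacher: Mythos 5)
Your proposal is correct and takes essentially the same approach as the paper: an exchange argument that deletes a maximal cell $\sigma \in \C_1 \cap \C_2$ from one of the two cover sets and applies Lemma~\ref{lem:char-blowup-sol} to conclude the blowup factor strictly decreases, contradicting optimality. You merely make explicit the checks the paper leaves implicit (closure under faces via maximality, the cover and balance conditions, and nonemptiness of both parts of the partition), which matches the paper's proof in substance.
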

\begin{proof}
If $\sigma \in \C_i \cap \C_j$ is a maximal cell, then consider the cover
$\C'$ obtained by removing $\sigma$ from the set of larger
cardinality. $\C'$ is certainly a cover satisfying $\alpha$-balance but
by Lemma~\ref{lem:char-blowup-sol} the blowup factor has decreased
which contradicts the optimality of $\C$.
\end{proof}
\noindent Suppose the input to \ablp{} is a graph $G$. In this context any cover $\C$ of $G$ is a pair of subgraphs $G_1,G_2$.
Lemma~\ref{lem:blowup-sol-max-simplices} tells us that in any optimal solution the intersection $I = G_1 \cap G_2$ of these two 
subgraphs is a set of vertices. The requirement that $\C$ is a cover implies that $I$ is a vertex separator. In other words
given a vertex separator of a graph $G$ we may view it as a cover of that graph and vice versa. 
The equivalent problem for vertex separators is for any $\alpha \in (\frac{1}{2},1)$:
\begin{description}
\addtolength{\itemsep}{-.8\baselineskip}
\item[\textsc{Problem:}]  \avertex{}
\item[\textsc{Instance:}] A graph $G$
\item[\textsc{Goal:}] Find a vertex separator $(V_1,V_2,I)$ of $G$ such that: 
\[ \card{I} \textrm{ is minimized} \textrm{ subject to } \max_i{(\card{V_i} + \card{E_i})} + \card{I} \leq \alpha(\card{V}+\card{E})  \]
\end{description}
where $E_i$ is the set of edges with at least one endpoint in $V_i$. \avertex{} is \NPH{} for any $\alpha \in (\frac{1}{2},1)$ and its decision problem variant is \NPC{}~\cite{rhl-yaggpis-14}.
\begin{theorem}
For any $\alpha \in (1/2,1)$ the optimization problem \ablp{} is \NPH{} and its decision problem variant \NPC{}.
\end{theorem}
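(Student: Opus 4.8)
The plan is to transfer the hardness of \avertex{} to \ablp{} by exhibiting a polynomial-time, value-preserving reduction built on the cover-versus-separator correspondence developed just above, and then to verify membership in \NP{} for the decision variant so that \NPC{} follows. Since \avertex{} is already known to be \NPH{}, and its decision variant \NPC{}~\cite{rhl-yaggpis-14}, the bulk of the argument is to check that this correspondence really is a reduction: that feasible covers map to feasible separators and conversely, and that the blowup objective and the separator objective agree.

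First I would settle membership in \NP{} for the decision variant. The only concern is that $\K^{\C}$ may be exponentially larger than $\K$, so the blowup factor cannot be read off by constructing $\K^{\C}$ explicitly. However, because \ablp{} fixes $p = 2$, the hypothesis of Lemma~\ref{lem:char-blowup-sol} --- that every triple intersection of cover sets vanishes --- holds vacuously, since there are no three sets to intersect, and the lemma yields $\factor = 1 + 2\card{I}/\card{\K}$ with $I = \C_1 \cap \C_2$. This expression is computable in polynomial time directly from the cover, so a cover $\C$ of size $2$ is a polynomial-size certificate whose objective value and $\alpha$-balance can both be verified efficiently. Hence the decision variant lies in \NP{}.

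Next I would give the reduction from \avertex{}. Given a graph $G = (V,E)$, view it as the one-dimensional simplicial complex $\K$ whose $0$- and $1$-simplices are the vertices and edges of $G$, so that $\card{\K} = \card{V} + \card{E}$. By Lemma~\ref{lem:blowup-sol-max-simplices}, any optimal cover $\C = \{\C_1, \C_2\}$ partitions the maximal cells of $\K$, namely the edges of $G$; each $\C_i$ is then the closure of its assigned edges, and $I = \C_1 \cap \C_2$ is exactly the set of vertices incident to edges in both parts. Thus $I$ is a vertex separator of $G$, and conversely every vertex separator $(V_1, V_2, I)$ determines such a cover. Under this identification Lemma~\ref{lem:char-blowup-sol} shows that the blowup factor $1 + 2\card{I}/\card{\K}$ is minimized precisely when the separator size $\card{I}$ is minimized, which matches the \avertex{} objective, while the \ablp{} balance bound $\max_i \card{\C_i} \leq \alpha\card{\K}$ translates into the corresponding bound on $\max_i(\card{V_i} + \card{E_i}) + \card{I}$. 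Reducing the decision version then amounts to setting the blowup threshold to $1 + 2k/\card{\K}$ for the separator bound $k$, and the reduction runs in polynomial time.

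The main obstacle I expect is the precise bookkeeping in the balance constraint rather than in the objective, which transfers cleanly because $\card{\K}$ is fixed. I would need to match $\card{\C_i}$ against $\card{V_i} + \card{E_i} + \card{I}$ term by term, taking care of edges whose endpoints both lie in the separator $I$ --- these are charged to one $\C_i$ in the cover picture but to neither $E_i$ in the separator picture --- and of any isolated vertices of $G$, which are maximal cells that must also be distributed. I would argue either that the \avertex{} instances produced by the hardness construction can be taken to have separators inducing no internal edges, or that such edges can be accounted for so that the two balance conditions coincide; this makes the correspondence an exact feasibility- and optimum-preserving bijection. With the reduction established in both directions, \ablp{} is \NPH{} and, together with the \NP{} membership above, its decision variant is \NPC{}.
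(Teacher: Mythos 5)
Your proposal is correct and follows essentially the same route as the paper: the paper also proves this theorem by observing that \ablp{} restricted to graph instances is equivalent to \avertex{}, using Lemma~\ref{lem:char-blowup-sol} to evaluate the blowup factor in polynomial time (giving \NP{} membership) and Lemma~\ref{lem:blowup-sol-max-simplices} to identify optimal covers with vertex separators. Your write-up is in fact more explicit than the paper's one-sentence proof, particularly in flagging the balance-constraint bookkeeping (edges inside the separator, isolated vertices) that the paper leaves implicit.
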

\begin{proof}
By restricting \ablp{} and \avertex{} are equivalent when the former is restricted to graph instances.
\end{proof}
This procedure shows us that finding covers of graphs with bounded overlap also identifies partitions of that graph.
In the next section we show how given a complex $\K$ and a partition of its 1-skeleton one can produce a cover of the entire complex with bounded overlap.
\subsection{Partition-Based Covers}
\input{algorithm}
\label{sec:partition-based-covers}
\label{sec:pcover}
\begin{figure}[h!]
\centering
\begin{minipage}[b]{.45\textwidth}
\begin{description}
\addtolength{\itemsep}{-.65\baselineskip}
\item[\small\textbf{Input:}] \small A complex $\K$, and a graph partition $P$.
\item[\small\textbf{Output:}] \small A cover $\C$, of size $\card{P}+1$.
\end{description}
%\vspace{-.6cm}
\begin{codebox}
\Procname{$\proc{Open-Cover}(\K,P)$}
 \li	$\id{\C} \gets \emptyset$
 \li  \Parfor\,$ \sigma \gets \sigma_1$ \To $\sigma_m \in \K$
 \li 	   \Do  $\id{\C}[\sigma] \gets \proc{Partition-Cell}(P,\sigma)$
          \End
 \li \Return \id{\C}
  \End
\end{codebox}
\end{minipage}
\begin{minipage}[b]{.45\textwidth}
\begin{description}
\addtolength{\itemsep}{-.65\baselineskip}
\item[\small \textbf{Input:}] \small A graph partition $P$ of size $p$, and simplex $\sigma$
\item[\small \textbf{Output:}] \small The index $i \in [p]$ of  $\tilde{\C}$ to place $\sigma$.
\end{description}
%\vspace{-.65cm}
\begin{codebox}
\Procname{$\proc{Partition-Cell}(P, \sigma = [v_0, \ldots, v_d])$}
 \li	$R \gets \emptyset$
 \li  \For $v \gets v_0$ \To $v_d \in \sigma$
 \li 	   \Do $R \gets P(v_0)$
          \End
 \li	\If $\card{R} = 1$ \Return $R[0]$
 \li    \Else \Return $\card{P}$
\end{codebox}
\end{minipage}
\caption{The pseudocode for $\proc{Open-Cover}$ which runs in $O(md/p)$ time, where $m$ is the 
number of simplices in $\K$ a $d$-dimensional complex, and $p$ is the 
maximum number of available cores. $P$ is indexed starting at 0. For a vertex $v$, $P(v)$ 
denotes the index of the partition set of $P$ containing $v$.
}
\label{alg:open-cov}
\end{figure}
In this section we describe an algorithm for generating covers on an 
arbitrary complex from a partition of its one skeleton. We emphasize that while we propose a specific
algorithm for generating covers any procedure for generating covers suffices. In many situations there might be a 
better approach for generating covers than the one presented. Recall that in the worst case, a cover may produce an exponentially large blowup. 
However, the heuristic presented in this section guarantees that $\factor < 3$. 

There are many algorithms for generating covers, and they are all valid inputs to our parallel algorithms. 
Zomorodian \& Carlsson consider two methods for cover enumeration, \emph{random $\epsilon$-balls} 
and \emph{tilings} \cite{zc-lh-08}. For complexes embedded in a low dimensional space one might consider
algorithms based on \emph{Voronoi diagrams} or when the data is available by level sets of 
\emph{Morse functions}.  However, in the general setting it is possible to generate a cover of an 
arbitrary simplicial complex from a partition of its one skeleton with a simple intersection pattern.

The algorithm $\proc{Partition-Based-Cover}$, illustrated in Figure~(\ref{fig:vignette1}), takes a complex $\K$ and positive 
integer $p \geq 2$ as input and produces a cover $\C$ of size $p+1$ as output. 
First, we extract the one-skeleton of $\K$ and represent it as a graph $G$. 
Second, we find a graph partition $P$ of $G$ of size $p$.  
Third, we extend $P$ to an open cover $\tilde{\C}$. Finally, we extend $\tilde{\C}$ to a cover $\C$. The algorithms
for producing these two covers are called $\proc{Open-Cover}$ and
$\proc{Close-Cover}$, respectively. 

There are many algorithms for computing partitions of graphs which
seem to fall into four major classes of algorithms: geometric, non-geometric, 
spectral, and hybrid methods \cite{fj-gp-98}. Hybrid methods mix the techniques of the 
other three. In practice, we use \textsc{Metis}, a hybrid method, since it tends to produce balanced partitions quickly~\cite{KaKu95}. 
Of course any partitioning scheme will work. Next, we describe $\proc{Open-Cover}(\K,P)$, which extends a partition of $G$ to an open cover of $\K$. 

The procedure $\proc{Open-Cover}(\K,P)$ is given in Algorithm~\ref{alg:open-cov} and outputs an open cover $\tilde{\C} = \{\tilde{\C}_i\}_{i \in [p]}$ which is a partition of $\K$. Given a partition $P = \{P_i\}_{i \in [p-1]}$ of the vertex set of $G$ we expand $P$ to $\tilde{\C}$. Specifically, we first create sets 
$\tilde{\C} = \{\tilde{\C}_i\}_{i \in [p]}$ where a simplex $\sigma$ is placed into $\tilde{\C}_i$ for $i \in [p-1]$ 
if all of its vertices lie in $P_i$ and is added to $\tilde{\C}_p$ otherwise. 

In the procedure $\proc{Close-Cover}$ we replace $\tilde{\C}_{i}$ with $\C_i = \Cl{(\tilde{\C}_i)}$. However, $\tilde{\C}_i$ is closed for $i \in [p-1]$ by construction so we only close the last set.  Both $\proc{Open-Cover}$ and $\proc{Close-Cover}$ can be implemented in parallel. 
We have the following lemma: 
\begin{lemma}
\label{lem:blowup-factor}
Given a complex $\K$, $p \geq 2$, $\proc{Partition-Based-Cover}(\K, p)$ 
generates a cover $\C$ with $\factor < 3$. 
\end{lemma}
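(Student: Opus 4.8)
The plan is to reduce the claim to the closed form for the blowup factor in Lemma~\ref{lem:char-blowup-sol} and then bound the overlap set $I$ using the explicit structure of the cover. Throughout write $\C = \{\C_0,\dots,\C_p\}$ for the $p+1$ sets returned by $\proc{Close-Cover}$, so that $\C_i = \tilde{\C}_i$ is the set of \emph{monochromatic} simplices whose vertices all lie in the block $P_i$ for $i \in [p-1]$, while $\C_p = \Cl{(\tilde{\C}_p)}$ is the closure of the \emph{rainbow} simplices, those whose vertices meet at least two blocks.

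First I would check the hypothesis of Lemma~\ref{lem:char-blowup-sol}, namely that every triple intersection of cover sets is empty. Since the blocks $P_i$ are disjoint, no simplex can be monochromatic for two distinct blocks, so $\C_i \cap \C_j = \emptyset$ whenever $i,j \in [p-1]$ are distinct; note also that $\proc{Close-Cover}$ leaves $\C_0,\dots,\C_{p-1}$ unchanged because a monochromatic set is already closed, which preserves this disjointness. Hence the only possible nonempty pairwise intersections are the $\C_i \cap \C_p$, and no simplex lies in three cover sets. Lemma~\ref{lem:char-blowup-sol} then gives $\factor = 1 + 2\,\card{I}/\card{\K}$ with $I = \bigcup_{i \neq j} \C_i \cap \C_j = \bigcup_{i \in [p-1]} (\C_i \cap \C_p)$, so it suffices to prove $\card{I} < \card{\K}$.

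Next I would identify $I$ concretely: a simplex belongs to $I$ exactly when it is monochromatic, putting it in some $\C_i$ with $i<p$, and is simultaneously a face of a rainbow simplex, putting it in $\C_p = \Cl{(\tilde{\C}_p)}$. The key observation is that a rainbow simplex itself lies only in $\C_p$ and in none of the $\C_i$ with $i<p$, hence is excluded from $I$. Therefore $I$ is disjoint from the set $\tilde{\C}_p$ of rainbow simplices, giving $I \subseteq \K \setminus \tilde{\C}_p$ and $\card{I} \le \card{\K} - \card{\tilde{\C}_p}$.

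I would finish with a short case split. If at least one rainbow simplex exists then $\card{\tilde{\C}_p} \ge 1$, so $\card{I} \le \card{\K} - 1 < \card{\K}$ and $\factor = 1 + 2\,\card{I}/\card{\K} < 3$. If no rainbow simplex exists then $\tilde{\C}_p = \emptyset$, so $\C_p = \emptyset$, the cover is a partition, $I = \emptyset$, and $\factor = 1 < 3$. The argument carries no real obstacle; the only point that needs care is the bookkeeping that pins $I$ down as the monochromatic faces of rainbow simplices and correctly excludes the rainbow simplices themselves, since it is precisely their presence in $\tilde{\C}_p$ that forces the strict inequality.
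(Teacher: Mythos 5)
Your proof is correct and takes essentially the same route as the paper's: disjointness of the first $p$ cover sets rules out triple intersections, and Lemma~\ref{lem:char-blowup-sol} then yields the bound. You are in fact more careful than the paper, which leaves implicit the justification that $\card{I} < \card{\K}$ holds \emph{strictly}; your observation that rainbow simplices are excluded from $I$ (with the degenerate case $\tilde{\C}_p = \emptyset$ handled separately) supplies exactly that missing step.
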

\begin{proof}
For a complex $\K$ and $p \geq 2$ let $\C$ be the cover of $\K$ by $p+1$ subcomplexes output by \proc{Partition-Based-Cover}(\K,p). The first
$p$ cover sets are disjoint since they are formed from disjoint sets of vertices. Therefore there can be at most pairwise intersections.
It follows by Lemma~\ref{lem:char-blowup-sol} that $\factor < 3$.
\end{proof}

Since we are interested only in the homology of $\K$ and not it's persistent homology we may avoid the construction of the blowup complex and use the open cover generated to place a filtration on $\K$. In particular, consider the filtration on $\K$ obtained by ordering $\tilde{\C}_i < \tilde{\C}_p$ for $i \in [p-1]$. It is clear that before including $\tilde{\C}_p$ the complex is again disconnected and thus these columns of the matrix may be reduced in parallel. Finally, we reduce this last set of columns against the columns from the first $p$ cover sets. We call this procedure $\proc{Heuristic-MH}$.
 
In the next section we compare these two parallel algorithms against the standard serial algorithm as well as the algorithm $\proc{Chunk}$ of Bauer et. al\@ on a series of examples. The $\proc{Chunk}$ algorithm is based on the spectral sequence of a filtration~\cite{bkr-cccph-13}.

\section{Experiments}
\begin{figure}[h]
\centering
\begin{subfigure}[b]{.45\textwidth}
\centering
\begin{tikzpicture}[scale=.65]
\begin{axis}[xlabel=\# of partitions, minor y tick num={1}, ylabel=speedup factor, legend style={legend pos=north west, font=\small}]
\legend{\multiblob, \bunny, \clique, \gnp, \sphere, ideal}
\addplot table [x=num_partitions, skip coords between index={0}{1}, y=speedup, col sep=comma, ignore chars=']
{pgf-speedup-figs/results/concurrent_homology/clique.11.22720.csv};
\addplot table [x=num_partitions, skip coords between index={0}{1}, y=speedup, col sep=comma, ignore chars=']
{pgf-speedup-figs/results/concurrent_homology/bunny..05.csv};
\addplot table [x=num_partitions, skip coords between index={0}{1}, y=speedup, col sep=comma, ignore chars=']
{pgf-speedup-figs/results/concurrent_homology/clique.20.csv};
\addplot table [x=num_partitions, skip coords between index={0}{1}, y=speedup, col sep=comma, ignore chars=']
{pgf-speedup-figs/results/concurrent_homology/gnp.1250.047.csv};
\addplot table [x=num_partitions, skip coords between index={0}{1}, y=speedup, col sep=comma, ignore chars=']
{pgf-speedup-figs/results/concurrent_homology/sphere.csv};
\addplot[dash pattern=on 4pt off 1pt on 4pt off 4pt, domain=2:10]{x+1};
\end{axis}
\end{tikzpicture}
\caption{Speedup factor for reducing $\partial_{\K^{\C}}$}
\label{fig:blowup-homology-speedup}
\end{subfigure}
\hfill
\begin{subfigure}[b]{.45\textwidth}
\centering
\begin{tikzpicture}[scale=.65]
%\pgfplotsset{ymax=5}
\begin{axis}[xlabel=\# of partitions/threads, minor y tick num={1}, ylabel=speedup factor, legend style={legend pos=north west, font=\small}]
\legend{\multiblob, \bunny, \clique, \gnp, \sphere, ideal}
\addplot table [x=num_partitions, skip coords between index={0}{1}, y=speedup, col sep=comma, ignore chars=']
{pgf-speedup-figs/results/cover_homology/clique.11.22720.csv};
\addplot table [x=num_partitions, skip coords between index={0}{1}, y=speedup, col sep=comma, ignore chars=']
{pgf-speedup-figs/results/cover_homology/bunny..05.csv};
\addplot table [x=num_partitions, skip coords between index={0}{1}, y=speedup, col sep=comma, ignore chars=']
{pgf-speedup-figs/results/cover_homology/clique.20.csv};
\addplot table [x=num_partitions, skip coords between index={0}{1}, y=speedup, col sep=comma, ignore chars=']
{pgf-speedup-figs/results/cover_homology/gnp.1250.047.csv};
\addplot table [x=num_partitions, skip coords between index={0}{1}, y=speedup, col sep=comma, ignore chars=']
{pgf-speedup-figs/results/cover_homology/sphere.csv};
\addplot[dash pattern=on 4pt off 1pt on 4pt off 4pt, domain=2:10]{x};
\end{axis}
\end{tikzpicture}
\caption{Speedup factor for reducing $\partial_{\K}$}
\label{fig:nonblowup-homology-speedup}
\end{subfigure}
\caption{(Left) Speedup factor $T_s/T_p$ where $T_p$ is the time to reduce $\partial_{\K^\C}$ in parallel on $p+1$ threads  and $T_s$ is the time to reduce $\partial_K$ in serial. (Right) Speedup factor of $T_s/T_p$ where $T_p$ measures the time to reduce $\partial_{\K}$ in parallel.}
\label{fig:reduction-speedup}
\end{figure}
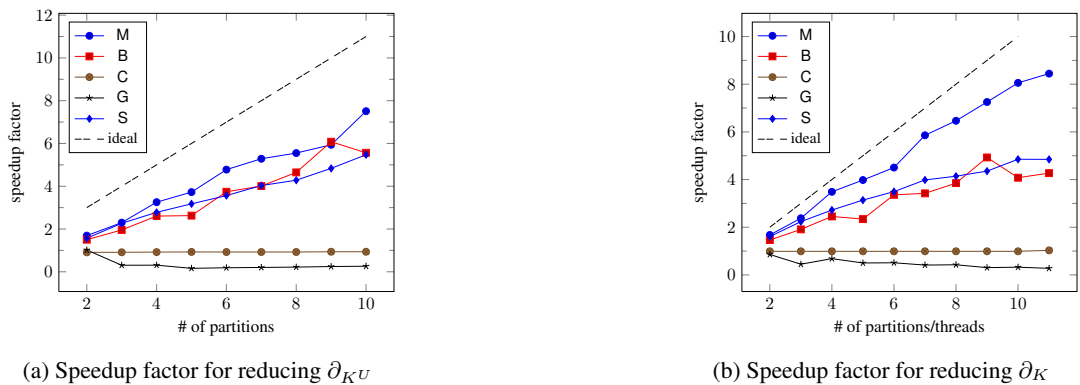
\label{sec:exp}
In this section, we describe the implementation of our algorithms
and explore their performance on real and synthetic data. We compare our
performance against our existing serial software as well as the Persistent Homology Algorithm Toolbox (PHAT)~\cite{bkr-cccph-13}. 
Our implementation is in \cplusplus\, using the generic programming paradigm. 
We rely on the METIS library for computing graph partitions 
\cite{KaKu95}, the Intel Threading Building Blocks 
Library~\cite{IntelTBB} for parallelism, and our own
library for homology computation. Our parallel implementation of $\proc{Multicore-Homology}$ computes
an initial filtration on $\K$, a cover $\C$, builds a blowup complex $\K^{\C}$ 
with its associated filtration [in parallel], and then reduces $\partial_{\K^{\C}}$.
For $\proc{Heuristic-MH}$ we reduces a permuted $\partial_\K$, instead of building $\K^{\C}$.
Unlike the psuedo-code for $\proc{Multicore-Homology}$ when reducing $\partial_{\K^{\C}}$,
our implementation reduces the columns corresponding to cells of the form $\sigma \times \tau$ with $\dim{(\tau)} > 0$ in serial
after the parallel reduction of all other cells. Preliminary experiments suggested that
this added parallelism would not produce speedup. Our serial implementation only 
computes an identical initial filtration, and then reduces $\partial_\K$.

We now provide details on how these experiments were carried out.
As previously mentioned all of our experiments are done 
using 11 cores on a 2 CPU, 12 Core, x86-64 Linux Machine, with 2.93 GHz Intel 
Xeon X5670 Processors, 74 GB of RAM, and hyperthreading disabled.
We time both parallel and serial programs in wall-clock time using the tbb::clock. 
We measure the total amount of memory requested by a process, its \emph{resident set size},
via the \emph{process filesystem}. This is an upper bound on the total memory used.  
Each time measured is the \emph{makespan} or longest running thread time within a section of code. 
Time is always reported in seconds, and all reported measurements are averaged over 10 trials.
We remind the reader that while we may spawn $p$ threads we only ever have at most $p-1$ of the total $p$ cores 
in order to leave room for system processes. In this work we use at most one thread per available 
core. When running PHAT we used the latest stable version 1.4 and the ``vector vector" option as 
this is the same basic data structure we use in our library. All software has compiled with gcc and optimizations
enabled.

\subsection{Data}
\begin{table*}
  \begin{center}
    \small
    \begin{tabular}{crrrrrrrr}
      \hline
      \multicolumn{6}{c}{Input Statistics} \\
      \multicolumn{1}{c}{$D$} &
      \multicolumn{1}{c}{$\card{D}$} &
      \multicolumn{1}{c}{$\epsilon$, $p$} &
      \multicolumn{1}{c}{$\card{E}$} &
      \multicolumn{1}{c}{$d$} &
      \multicolumn{1}{c}{$\card{\K}$} \\
      \hline
      \blobs &  249,920 & - & 1,272,319 & 10 & 46,530,559 \\
      \clique &  20 & - & 190 & 19 & 1,048,575 \\
      \bunny & 34,837 & 0.05 & 489,876 & 3 & 9,714,912 \\
      \sphere & 50,000 & 0.18 & 546,388 & 8 & 19,134,612 \\
      \gnp & 1250 & 0.047 &  & 4 & 73,309 \\
      \hline
    \end{tabular}
  \end{center}
  \caption{%
    Input Statistics: The name $D$, and number of vertices of each data set $\card{D}$, 
    as well as input parameter $\epsilon$ or $p$ in the case of a random graph, 
    embedding dimension $d = \dim{D}$, size $\card{K}$, and edge-set size $\card{E}$ of each complex $\K$. 
  }
  \label{tab:data}
\end{table*}
 
We summarize each data set in Table~\ref{tab:data}.  All complexes are skeleta of a Vietoris-Rips Complex~\cite{z-fcv-10}.  
Next, we describe the input space for each experiment. 
Recall that {\blobs} is a collection of 22,720 copies of a fully connected 10 dimensional complex on 11 vertices, organized
into 10 groups of 2,272, with each 
copy within a group connected to the next by a single edge, and each group connected to the next by a single edge as shown in Figure~(\ref{fig:blobs-vis}). 
{\clique} is a fully connected complex on 19 vertices. Recall that $\Delta^{[n]}$ has $\Theta(2^n)$ 
faces. {\bunny} is a 3-complex built on a set of points sampled from the 
\emph{Stanford bunny}. We create {\sphere} by using Muller's method~\cite{m-nmgpuns-59} to sample 
uniformly on the unit 3-sphere and then use the diagonal map
$x \rightarrow (x,x)$ to embed the points in $\R^8$~\cite{hatcher}. {\gnp} is a 
4-dimensional clique complex built on a sparse \Erdos-\Renyi\ graph $G(n,p)$ 
with $n = 1250$ and $p = 0.047$.

\subsection{Statistics}
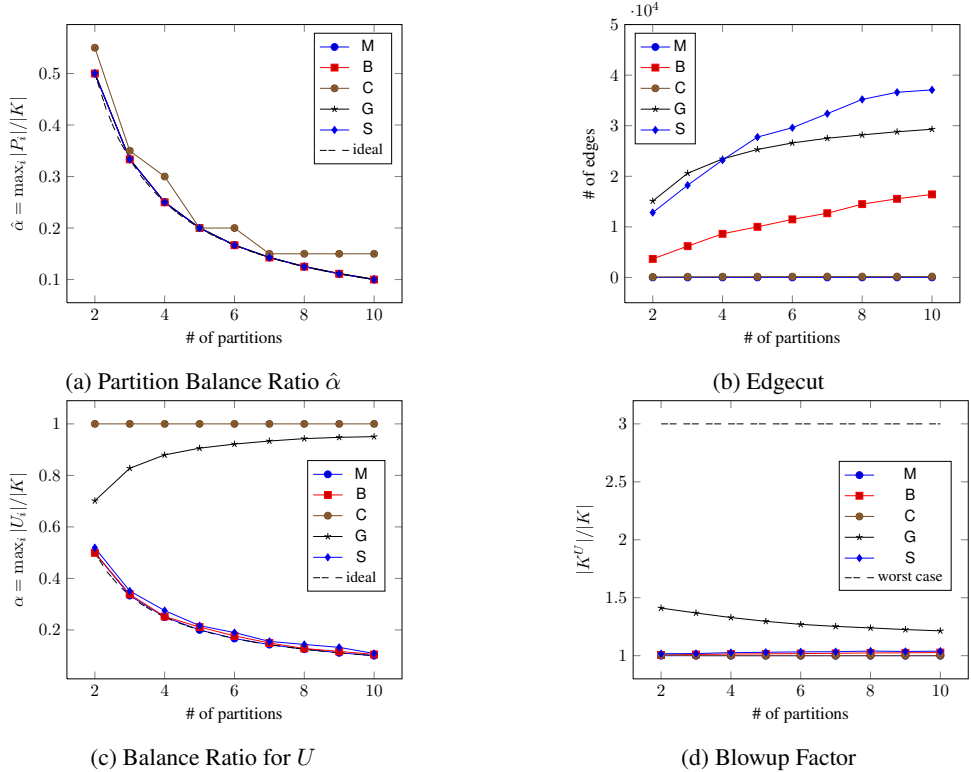
\begin{figure}
\centering
\begin{subfigure}[b]{.45\textwidth}
\centering
\begin{tikzpicture}[scale=.65]
\begin{axis}[xlabel=\# of partitions, ylabel={$\hat{\alpha}= \max_i \card{P_i} / \card{\K}$}, legend style={legend pos=north east, font=\small}]
\legend{\multiblob, \bunny, \clique, \gnp, \sphere, ideal}
\addplot table [x=num_partitions, y=graph_balance_ratio, col sep=comma] {pgf-speedup-figs/results/concurrent_homology/clique.11.22720.csv};
\addplot table [x=num_partitions, y=graph_balance_ratio, col sep=comma] {pgf-speedup-figs/results/concurrent_homology/bunny..05.csv};
\addplot table [x=num_partitions, y=graph_balance_ratio, col sep=comma] {pgf-speedup-figs/results/concurrent_homology/clique.20.csv};
\addplot table [x=num_partitions, y=graph_balance_ratio, col sep=comma] {pgf-speedup-figs/results/concurrent_homology/gnp.1250.047.csv};
\addplot table [x=num_partitions, y=graph_balance_ratio, col sep=comma] {pgf-speedup-figs/results/concurrent_homology/sphere.csv};
\addplot[dash pattern=on 4pt off 1pt on 4pt off 4pt, domain=2:10]{1/x};
\end{axis}
\end{tikzpicture}
\caption{Partition Balance Ratio $\hat{\alpha}$}
\label{fig:graph-balance}
\end{subfigure}
\begin{subfigure}[b]{.45\textwidth}
\centering
\begin{tikzpicture}[scale=.65]
%\pgfplotsset{ymax=5}
\begin{axis}[xlabel=\# of partitions, ymax=50000, ylabel={\# of edges}, legend style={legend pos=north west, font=\small}]
\legend{\multiblob, \bunny, \clique, \gnp, \sphere, ideal}
\addplot table [x=num_partitions, skip coords between index={0}{1}, y=edgecut, col sep=comma, ignore chars=']
{pgf-speedup-figs/results/concurrent_homology/clique.11.22720.csv};
\addplot table [x=num_partitions, skip coords between index={0}{1}, y=edgecut, col sep=comma, ignore chars=']
{pgf-speedup-figs/results/concurrent_homology/bunny..05.csv};
\addplot table [x=num_partitions, skip coords between index={0}{1}, y=edgecut, col sep=comma, ignore chars=']
{pgf-speedup-figs/results/concurrent_homology/clique.20.csv};
\addplot table [x=num_partitions, skip coords between index={0}{1}, y=edgecut, col sep=comma, ignore chars=']
{pgf-speedup-figs/results/concurrent_homology/gnp.1250.047.csv};
\addplot table [x=num_partitions, skip coords between index={0}{1}, y=edgecut, col sep=comma, ignore chars=']
{pgf-speedup-figs/results/concurrent_homology/sphere.csv};
\end{axis}
\end{tikzpicture}
\caption{Edgecut}
\label{fig:graph-edgecut}
\end{subfigure}
\begin{subfigure}[b]{.45\textwidth}
\centering
\begin{tikzpicture}[scale=.65]
\begin{axis}[xlabel=\# of partitions, ylabel={$\alpha = \max_i \card{\C_i} / \card{\K}$}, legend style={legend pos=north east, font=\small},legend style={at={(.95,.55)},anchor=east}]
\legend{\multiblob, \bunny, \clique, \gnp, \sphere, ideal}
\addplot table [x=num_partitions, y=cover_balance_ratio, col sep=comma,skip coords between index={0}{1}] 
{pgf-speedup-figs/results/concurrent_homology/clique.11.22720.csv};
\addplot table [x=num_partitions, y=cover_balance_ratio, col sep=comma, skip coords between index={0}{1}] 
{pgf-speedup-figs/results/concurrent_homology/bunny..05.csv};
\addplot table [x=num_partitions,skip coords between index={0}{1}, y=cover_balance_ratio, col sep=comma] 
{pgf-speedup-figs/results/concurrent_homology/clique.20.csv};
\addplot table [x=num_partitions, skip coords between index={0}{1},y=cover_balance_ratio, col sep=comma] 
{pgf-speedup-figs/results/concurrent_homology/gnp.1250.047.csv};
\addplot table [x=num_partitions, skip coords between index={0}{1},y=cover_balance_ratio, col sep=comma] 
{pgf-speedup-figs/results/concurrent_homology/sphere.csv};
\addplot[dash pattern=on 4pt off 1pt on 4pt off 4pt, domain=2:10]{1/x};
\end{axis}
\end{tikzpicture}
\caption{Balance Ratio for $\C$}
\label{fig:balance-factors}
\end{subfigure}
\begin{subfigure}[b]{.45\textwidth}
\centering
\begin{tikzpicture}[scale=.65]
\begin{axis}[xlabel=\# of partitions, ylabel=$\ratio$, legend style={legend pos=north east, font=\small},legend style={at={(.95,.55)},anchor=east}]
\legend{\multiblob, \bunny, \clique, \gnp, \sphere, worst case}
\addplot table [x=num_partitions, y=blowup_factor, col sep=comma,skip coords between index={0}{1}] 
{pgf-speedup-figs/results/concurrent_homology/clique.11.22720.csv};
\addplot table [x=num_partitions, y=blowup_factor, col sep=comma, skip coords between index={0}{1}] 
{pgf-speedup-figs/results/concurrent_homology/bunny..05.csv};
\addplot table [x=num_partitions,skip coords between index={0}{1}, y=blowup_factor, col sep=comma] 
{pgf-speedup-figs/results/concurrent_homology/clique.20.csv};
\addplot table [x=num_partitions, skip coords between index={0}{1},y=blowup_factor, col sep=comma] 
{pgf-speedup-figs/results/concurrent_homology/gnp.1250.047.csv};
\addplot table [x=num_partitions, skip coords between index={0}{1},y=blowup_factor, col sep=comma] 
{pgf-speedup-figs/results/concurrent_homology/sphere.csv};
\addplot[dash pattern=on 4pt off 1pt on 4pt off 4pt, domain=2:10]{3};
\end{axis}
\end{tikzpicture}
\caption{Blowup Factor}
\label{fig:blowup-factors}
\end{subfigure}
\caption{Statistics for partitions and covers generated}
\label{fig:statistics}
\end{figure}
Recall from Section~\ref{sec:partition-based-covers} that our input is a complex
$\K$ and integer $p > 1$. Our goal is to build a balanced cover for 
which $\factor$ is as small as possible. First, we build a a graph partition of
the one skeleton $G(\K)$. To produce our graph partition we chose the
unsupervised graph partitioning algorithm METIS because it tends to produce 
balanced graph partitions. In Figure~(\ref{fig:graph-balance}) we show the balance
ratio $\hat{\alpha} = \max_i{\card{V_i}}/\card{V}$ for each partition produced by METIS.
Next, we complete our graph partition into a cover. Figure~(\ref{fig:balance-factors})
shows the balance ratio $\alpha = \max_i{\card{\C_i}}/\card{\K}$ for covers
produced by: $\proc{Partition-Based-Cover}$.
Finally, the procedure $\proc{Build-Blowup-Complex}$ 
computes the blowup complex along with its filtration. 
In Figure~(\ref{fig:blowup-factors}) we plot $\factor$. Recall that 
covers produced by $\proc{Partition-Based-Cover}$ have
$\factor < 3$ and in general for $n$ sets this ratio is at worst $O(2^n)$. 

\subsection{Timing \& Measurements}
For each of our data sets we present the speedup factor of
our reduction algorithm versus serial persistence in Figure~(\ref{fig:reduction-speedup}).

First, we can see that our techniques tend to scale the best on inputs
in which all topological features are localized by the cover. For example, we see the
best performance on $\blobs$. This is not surprising since for any $p \in [2, 10]$ 
this complex exhibits a partition-based cover which balances its 
46.5M simplices nearly perfectly while maintaining that the size of all intersections between all sets is exactly $p-1$.
Second, geometric inputs such as $\bunny$ and $\sphere$ have entirely global topology; These
global topological features are resolved by reducing a handful of columns in the portion of the computation that is 
executed serially. However, these inputs still emit balanced covers, so we see speedup since 
overall the bulk of the work is roughly evenly divided across each core.
Finally, we see that inputs which are flag complexes of cliques or expander graphs, 
such as $\clique$ or $\gnp$, emit no balanced cover and all covers seem to result in a large blowup complex. 
As expected our parallel algorithms exhibit no speedup on these inputs. 
\begin{figure}
\centering
%\fbox{
\begin{tikzpicture}[scale=.65]
\begin{semilogyaxis}[
name=plot1,
%legend pos=outer north east,
xmin=0,
xmax=11,
ymin=1,
ymax=20000,
xlabel=\# of partition sets,
ylabel=Maximum Resident Set Size (MB),
title={$\proc{Multicore-Homology}$},
legend style={at = {(.9,.5)},font=\large}]
%\legend{\multiblob, \bunny, \clique, \gnp, \sphere}
\addplot table [x=num_partitions, y=max memory (MB), col sep=comma] {pgf-speedup-figs/results/concurrent_homology/clique.11.22720.csv};
\addplot table [x=num_partitions, y=max memory (MB), col sep=comma] {pgf-speedup-figs/results/concurrent_homology/bunny..05.csv};
\addplot table [x=num_partitions, y=max memory (MB), col sep=comma] {pgf-speedup-figs/results/concurrent_homology/clique.20.csv};
\addplot table [x=num_partitions, y=max memory (MB), col sep=comma] {pgf-speedup-figs/results/concurrent_homology/gnp.1250.047.csv};
\addplot table [x=num_partitions,, y=max memory (MB), col sep=comma] {pgf-speedup-figs/results/concurrent_homology/sphere.csv};
\end{semilogyaxis}
\begin{semilogyaxis}[
name=plot2,
at=(plot1.outer east), anchor=outer west,
xmin=0,
xmax=11,
ymin=1,
ymax=20000,
xlabel=\# of threads,
title={$\proc{Chunk}$},
legend style={legend pos=south east,font=\tiny}, 
legend style={at = {(1,.45)},font=\large}]
%\legend{\multiblob, \bunny, \clique, \gnp, \sphere}
\addplot table [x=num_threads, y=max memory (MB), col sep=comma] {pgf-speedup-figs/results/phat_14_chunk/clique.11.22720.csv};
\addplot table [x=num_threads, y=max memory (MB), col sep=comma] {pgf-speedup-figs/results/phat_14_chunk/bunny..05.csv};
\addplot table [x=num_threads, y=max memory (MB), col sep=comma] {pgf-speedup-figs/results/phat_14_chunk/clique.20.csv};
\addplot table [x=num_threads, y=max memory (MB), col sep=comma] {pgf-speedup-figs/results/phat_14_chunk/gnp.1250.047.csv};
\addplot table [x=num_threads, y=max memory (MB), col sep=comma] {pgf-speedup-figs/results/phat_14_chunk/sphere.csv};
\end{semilogyaxis}

\begin{semilogyaxis}[
name=plot3,
at=(plot1.below south west), anchor=above north west,
xmax=11,
ymin=1,
ymax=20000,
title={$\proc{Heuristic-MH}$},
xlabel=\# of partition sets, 
legend style={at={(.98,.43)}, font=\footnotesize}]
\legend{\multiblob, \bunny, \clique, \gnp, \sphere}
\addplot table [x=num_partitions, y=max memory (MB), col sep=comma] {pgf-speedup-figs/results/cover_homology/clique.11.22720.csv};
\addplot table [x=num_partitions, y=max memory (MB), col sep=comma] {pgf-speedup-figs/results/cover_homology/bunny..05.csv};
\addplot table [x=num_partitions, y=max memory (MB), col sep=comma] {pgf-speedup-figs/results/cover_homology/clique.20.csv};
\addplot table [x=num_partitions, y=max memory (MB), col sep=comma] {pgf-speedup-figs/results/cover_homology/gnp.1250.047.csv};
\addplot table [x=num_partitions, y=max memory (MB), col sep=comma] {pgf-speedup-figs/results/cover_homology/sphere.csv};
\end{semilogyaxis}

\begin{semilogyaxis}[
name=plot4,
at=(plot3.outer east), anchor=outer west,
ymin=1,
ymax=20000,
xmin=0,
xmax=11,
xlabel= \# of threads, 
title={$\proc{Spectral-Sequence}$},
minor y tick num={5},
legend style={at={(1,.6)}, font=\large}]
\addplot table [x=num_threads, y=max memory (MB), col sep=comma] {pgf-speedup-figs/results/phat_14_ss/clique.11.22720.csv};
\addplot table [x=num_threads, y=max memory (MB), col sep=comma] {pgf-speedup-figs/results/phat_14_ss/bunny..05.csv};
\addplot table [x=num_threads, y=max memory (MB), col sep=comma] {pgf-speedup-figs/results/phat_14_ss/clique.20.csv};
\addplot table [x=num_threads, y=max memory (MB), col sep=comma] {pgf-speedup-figs/results/phat_14_ss/gnp.1250.047.csv};
\addplot table [x=num_threads, y=max memory (MB), col sep=comma] {pgf-speedup-figs/results/phat_14_ss/sphere.csv};
\end{semilogyaxis}
\end{tikzpicture}
\caption{Total memory usage for each algorithm. Recall that PHAT takes as input a boundary matrix whereas the 
procedures outlined in this work take as input a simplicial complex and generates an identical boundary matrix before reducing it. 
At $x=1$ on all plots we display the memory used for the standard algorithm from the appropriate software package.}
\label{fig:memory-usage}
\end{figure}
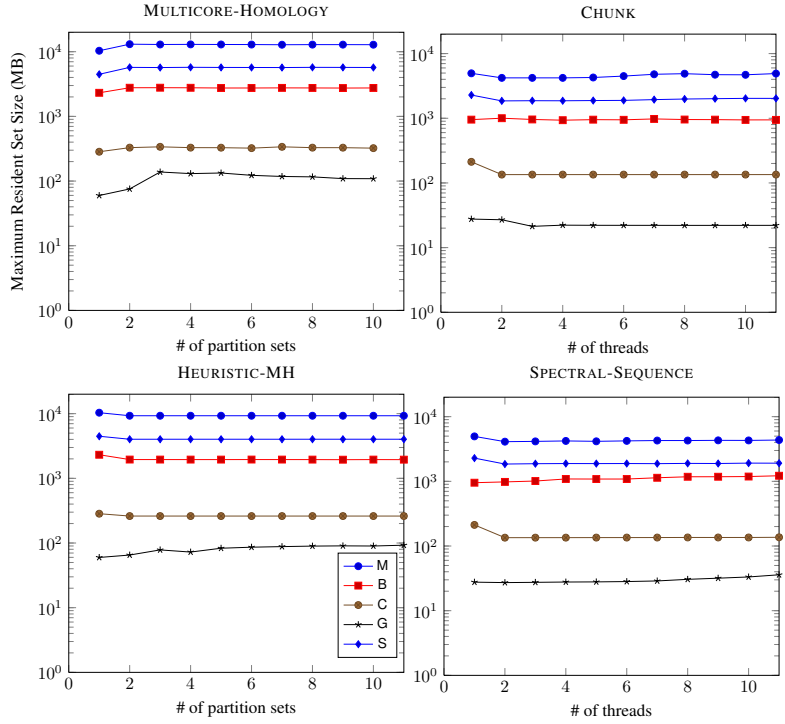

We observe that with the exception of $\gnp$ the parallel reduction of the boundary matrix for the blowup complex 
runs in time similar to the parallel reduction of the permuted boundary matrix. However there is overhead to each approach. 
Both algorithms require the computation of a cover. On one hand, to reduce $\partial_{\K^\C}$ we must first build $\K^\C$ and its associated filtration. 
However in $\proc{Heuristic-MH}$ we must construct a new filtration on $\K$. 
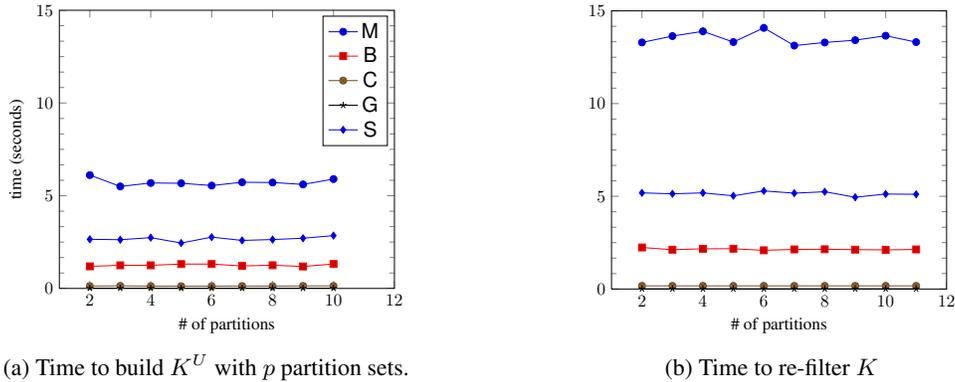
\begin{figure}
\centering
\begin{subfigure}[b]{.45\textwidth}
\centering
\begin{tikzpicture}[scale=.65]
\begin{axis}[
name=left axis,
ymin=0,
ymax=15,
xmax=12,
xlabel=\# of  partitions,
ylabel=time (seconds),
minor y tick num={5},
legend style={font=\large}]
\legend{\multiblob, \bunny, \clique, \gnp, \sphere, ideal}
\addplot table [x=num_partitions, skip coords between index={0}{1}, y=build_blowup, col sep=comma, ignore chars=']
{pgf-speedup-figs/results/concurrent_homology/clique.11.22720.csv};
\addplot table [x=num_partitions, skip coords between index={0}{1}, y=build_blowup, col sep=comma, ignore chars=']
{pgf-speedup-figs/results/concurrent_homology/bunny..05.csv};
\addplot table [x=num_partitions, skip coords between index={0}{1}, y=build_blowup, col sep=comma, ignore chars=']
{pgf-speedup-figs/results/concurrent_homology/clique.20.csv};
\addplot table [x=num_partitions, skip coords between index={0}{1}, y=build_blowup, col sep=comma, ignore chars=']
{pgf-speedup-figs/results/concurrent_homology/gnp.1250.047.csv};
\addplot table [x=num_partitions, skip coords between index={0}{1}, y=build_blowup, col sep=comma, ignore chars=']
{pgf-speedup-figs/results/concurrent_homology/sphere.csv};
\end{axis}
\end{tikzpicture}
\caption{Time to build $\K^{\C}$ with $p$ partition sets.}
\end{subfigure}
\begin{subfigure}[b]{.45\textwidth}
\centering
\begin{tikzpicture}[scale=.65]
\begin{axis}[
ymin=0,
ymax=15,
xmax=12,
xlabel=\# of  partitions,
%ylabel=time (seconds),
minor y tick num={5},
legend style={legend pos = north west,font=\large}]
%\legend{\multiblob, \bunny, \clique, \gnp, \sphere}
\addplot table [x=num_partitions, skip coords between index={0}{1}, y=re-filter complex, col sep=comma, ignore chars=']
{pgf-speedup-figs/results/cover_homology/clique.11.22720.csv};
\addplot table [x=num_partitions, skip coords between index={0}{1}, y=re-filter complex, col sep=comma, ignore chars=']
{pgf-speedup-figs/results/cover_homology/bunny..05.csv};
\addplot table [x=num_partitions, skip coords between index={0}{1}, y=re-filter complex, col sep=comma, ignore chars=']
{pgf-speedup-figs/results/cover_homology/clique.20.csv};
\addplot table [x=num_partitions, skip coords between index={0}{1}, y=re-filter complex, col sep=comma, ignore chars=']
{pgf-speedup-figs/results/cover_homology/gnp.1250.047.csv};
\addplot table [x=num_partitions, skip coords between index={0}{1}, y=re-filter complex, col sep=comma, ignore chars=']
{pgf-speedup-figs/results/cover_homology/sphere.csv};
\end{axis}
\end{tikzpicture}
\caption{Time to re-filter $\K$}
\end{subfigure}
\caption{Comparison of the time to build a blowup complex in $O(\frac{m}{p} + p)$ time versus re-filter the base complex in $O(\frac{m}{p}\log{m})$.}
\label{fig:blowup-vs-no-blowup}
\end{figure}
Recall that the procedure $\proc{Build-Blowup-Complex}$ runs in parallel and has parallel running time $O(2m/p + p)$ time where $m = \card{\K^{\C}}$ 
and $p$ is the number of processors available. 
The procedure $\proc{Build-Blowup-Complex}$ is implemented as a variant of the $\proc{Prefix-Sum}$ algorithm~\cite{breshears}. 
In particular this means that $\proc{Build-Blowup-Complex}$ produces the filtration of the blowup complex along with the complex itself. 
Aside from its output $\proc{Build-Blowup-Complex}$ only uses $O(p)$ extra space. When avoiding the blowup complex we 
do so by creating a new filtration in $O(\frac{m}{p}\log{m})$ where $m = \card{\K}$ and $p$ is the total number of available threads.

Figure~(\ref{fig:blowup-vs-no-blowup}) compares the running time of $\proc{Build-Blowup-Complex}$ against the time to re-filter $\K$.
From the standpoint of memory consumption it is clear that the blowup avoiding algorithm is a better choice. However,
when the resulting blowup complex is similar in size to the original space, It may be possible to significantly improve overall running time 
by building the blowup complex simply because the process of sorting may end up being slower than building the blowup.

We end this section by comparing the \mv algorithm to $\proc{Chunk}$ and \proc{Spectral-Sequence} algorithms available in PHAT.
$\proc{Spectral-Sequence}$ and $\proc{Chunk}$ are parallel implementations of the spectral sequence algorithm based on the spectral sequence of a filtration~\cite{bkr-cccph-13}.
We plot the time to reduce $\partial_{\K}$ and $\partial_{\K^\C}$ with $p$ threads versus the time for the each algorithm from PHAT to reduce 
$\partial_{\K}$ in Figure~(\ref{fig:ctl_vs_phat}). 
Figure~(\ref{fig:memory-usage}) compares the total memory usage for these algorithms. Recall that PHAT takes as input
a description of $\partial_{\K}$ whereas for our experiments we read in as input $\K$ and then build and reduce $\partial_{\K^{\_}}$. 
While the implementation of the chunk algorithm in PHAT can be significantly faster than its implementation of the standard algorithm,
their algorithms do not always seem to scale with the number of available threads. Our experiments suggest that the algorithms 
provided in PHAT attain speedup mainly due to the out of order nature of their reductions. The two optimizations used
in these algorithms significantly reduces the total work required as compared to the serial algorithm, but these optimizations 
do not seem to help scalability.  Practically, this software is still in the early stages of development, so we expect future versions to 
be more competitive.
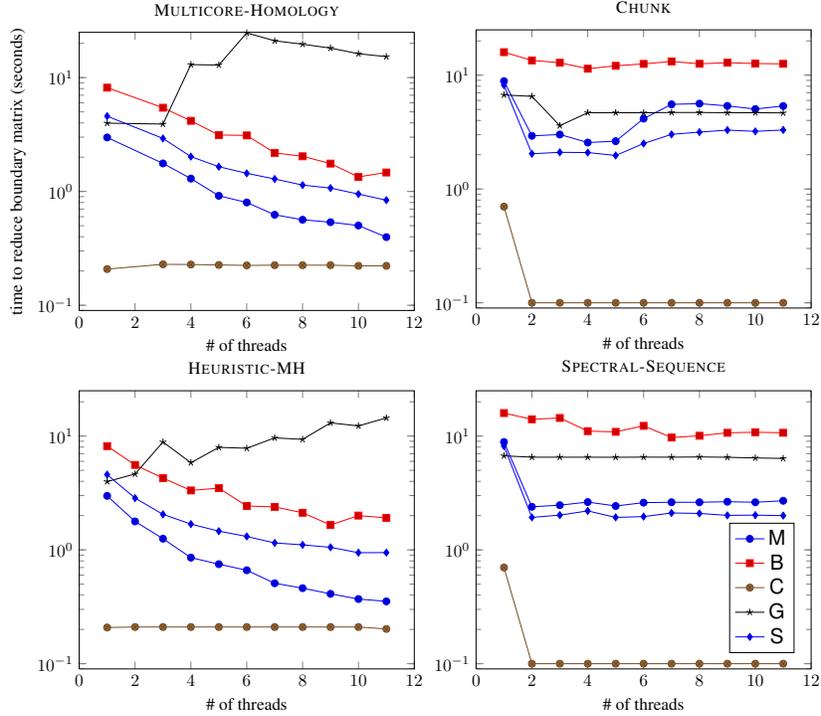
\begin{figure}
\centering
\begin{tikzpicture}[scale=.65]
\begin{semilogyaxis}[
name=plot1,
ymin=.09,
ymax=25,
xmin=0,
xmax=12,
xlabel=\# of  threads, 
title={$\proc{Multicore-Homology}$},
ylabel= time to reduce boundary matrix (seconds),
minor y tick num={5},
legend style={at = {(.25,.6)}, font=\large}]
%\legend{\multiblob, \bunny, \clique, \gnp, \sphere}
\addplot table [x=num_threads, y=persistence, col sep=comma] {pgf-speedup-figs/results/concurrent_homology/clique.11.22720.csv};
\addplot table [x=num_threads, y=persistence, col sep=comma] {pgf-speedup-figs/results/concurrent_homology/bunny..05.csv};
\addplot table [x=num_threads, y=persistence, col sep=comma] {pgf-speedup-figs/results/concurrent_homology/clique.20.csv};
\addplot table [x=num_threads, y=persistence, col sep=comma] {pgf-speedup-figs/results/concurrent_homology/gnp.1250.047.csv};
\addplot table [x=num_threads, y=persistence, col sep=comma] {pgf-speedup-figs/results/concurrent_homology/sphere.csv};
\end{semilogyaxis}
\begin{semilogyaxis}[
name=plot3,
at=(plot1.below south east), anchor=above north east,
ymin=.09,
ymax=25,
xmin=0,
xmax=12,
xlabel= \# of threads,
title={$\proc{Heuristic-MH}$},
%ylabel= time to reduce boundary matrix (seconds),
minor y tick num={5},
legend style={at={(1,.55)}, font=\large}]
%\legend{\multiblob, \bunny, \clique, \gnp, \sphere}
\addplot table [x=num_partitions, y=persistence, col sep=comma] {pgf-speedup-figs/results/cover_homology/clique.11.22720.csv};
\addplot table [x=num_partitions, y=persistence, col sep=comma] {pgf-speedup-figs/results/cover_homology/bunny..05.csv};
\addplot table [x=num_partitions, y=persistence, col sep=comma] {pgf-speedup-figs/results/cover_homology/clique.20.csv};
\addplot table [x=num_partitions, y=persistence, col sep=comma] {pgf-speedup-figs/results/cover_homology/gnp.1250.047.csv};
\addplot table [x=num_partitions, y=persistence, col sep=comma] {pgf-speedup-figs/results/cover_homology/sphere.csv};
\end{semilogyaxis}

\begin{semilogyaxis}[
name=plot4,
at=(plot3.right of north east), anchor=left of north west,
ymin=.09,
ymax=25,
xmin=0,
xmax=12,
xlabel= \# of threads, 
title={$\proc{Spectral-Sequence}$},
minor y tick num={5},
legend style={at={(.95,.525)}, font=\large}]
\legend{\multiblob, \bunny, \clique, \gnp, \sphere}
\addplot table [x=num_partitions, y=persistence, col sep=comma] {pgf-speedup-figs/results/phat_14_ss/clique.11.22720.csv};
\addplot table [x=num_partitions, y=persistence, col sep=comma] {pgf-speedup-figs/results/phat_14_ss/bunny..05.csv};
\addplot table [x=num_partitions, y=persistence, col sep=comma] {pgf-speedup-figs/results/phat_14_ss/clique.20.csv};
\addplot table [x=num_partitions, y=persistence, col sep=comma] {pgf-speedup-figs/results/phat_14_ss/gnp.1250.047.csv};
\addplot table [x=num_partitions,, y=persistence, col sep=comma] {pgf-speedup-figs/results/phat_14_ss/sphere.csv};
\end{semilogyaxis}

\begin{semilogyaxis}[
name=plot2,
at=(plot4.above north west),
anchor = below south west,
ymin=.09,
ymax=25,
xmin=0,
xmax=12,
xlabel=\# of  threads,
title={$\proc{Chunk}$},
minor y tick num={5},
legend style={at={(.95,.525)}, font=\large}]
%\legend{\multiblob, \bunny, \clique, \gnp, \sphere}
\addplot table [x=num_partitions, y=persistence, col sep=comma] {pgf-speedup-figs/results/phat_14_chunk/clique.11.22720.csv};
\addplot table [x=num_partitions, y=persistence, col sep=comma] {pgf-speedup-figs/results/phat_14_chunk/bunny..05.csv};
\addplot table [x=num_partitions, y=persistence, col sep=comma] {pgf-speedup-figs/results/phat_14_chunk/clique.20.csv};
\addplot table [x=num_partitions, y=persistence, col sep=comma] {pgf-speedup-figs/results/phat_14_chunk/gnp.1250.047.csv};
\addplot table [x=num_partitions,, y=persistence, col sep=comma] {pgf-speedup-figs/results/phat_14_chunk/sphere.csv};
\end{semilogyaxis}
\end{tikzpicture}
\caption{Time to reduce the boundary matrix for each algorithm. At $x=1$ on all plots we display the running time for reducing $\partial_{\K}$ using the standard algorithm from the appropriate software package.}
\label{fig:ctl_vs_phat}
\end{figure}
\section{Conclusion \& Future Work}
In this paper we presented two methods for computing homology in parallel. 
We describe each step of both methods, implement all algorithms, and present 
preliminary experimental results. While our main goal is to compute the 
persistent homology of larger complexes in 
distributed memory we have demonstrated the ability for parallel computations 
based on spatial decompositions of the input to outperform serial computations.

There are many avenues for future research. 
The nerve of the covers generated in this paper have are a star graph. 
It would be useful to be able to generate covers whose nerve has higher topological features. 
For example, if the nerve was a cycle then we could take advantage of 
added parallelism when reducing the corresponding cells in the blowup complex. 
The partition based covers are akin to a bottom up approach to cover generation. A top
down algorithm which operates by partitioning the maximal cells might have better performance
on datasets where a small separator is non existent or difficult to find. It would be of clear 
interest to have an approximation algorithm to the problem discussed in this work or to
a variant thereof. It would also be of interest to combine the algorithms outlined in this works with the ones 
from PHAT. In particular, each piece of the boundary matrix produced by a Mayer-Vietoris style 
algorithm could be further reduced via these alternative approaches. 

It is possible to filter the blowup complex to have identical persistent homology 
to that of a filtration $\Filt{\K}$ of an input complex $\K$. 
Given a filtration $\Filt{\K}$ on $\K$ and a cover $\C$
one can construct a filtration on $\K^\C$ by restriction of the cover to the 
each subspace in the filtration. One can now use this data to construct a filtration of 
blowup complexes. The resulting filtration produces identical persistent homology to that 
of $\Filt{\K}$ on $\K$. At a chain level, this amounts to ordering product cells first by their factor in $\K$, breaking ties
using the second factor. Recall that in this work we ordered product cells first by the second factor, 
breaking ties using the first factor. While, it is no longer straightforward to carry out the 
persistence algorithm in parallel as described in this work, it is possible to compute the persistent homology
of this filtration in parallel. We leave the details to a followup paper.

\section*{Acknowledgments \& Bibliography}
The authors would like to thank Gunnar Carlsson, Steve Canon, and Milka Doktorova, for discussions and support.

\subsection*{Bibliography}
{ 
  \small 
  \bibliographystyle{elsarticle-num}
  \bibliography{multicore} 
}
\end{document}